\newtheorem{theorem}{Theorem}
\newtheorem{definition}[theorem]{\sc{Definition}}
\newtheorem{remark}[theorem]{\sc{Remark}}
\newtheorem{lemma}[theorem]{\sc{Lemma}}
\newtheorem{proposition}[theorem]{\sc{Proposition}}
\title[Optimal transport between a simplex soup and a point cloud]
      {An algorithm for optimal transport between a simplex soup and a point
  cloud}
\author{Quentin M\'erigot}
\address{Laboratoire de math\'ematiques d'Orsay, Universit\'e Paris-Sud,
  Orsay, France}
\author{Jocelyn Meyron}
\address{GIPSA-Lab, Grenoble INP,
  Grenoble, France \& Laboratoire Jean Kuntzmann, Universit\'e Grenoble-Alpes,
  Grenoble, France}
\author{Boris Thibert}
\address{Laboratoire Jean Kuntzmann, Universit\'e Grenoble-Alpes,
  Grenoble, France}
\begin{document}

\begin{abstract}

We propose a numerical method to find the optimal transport map
between a measure supported on a lower-dimensional subset of $\Rsp^d$
and a finitely supported measure. More precisely, the source measure
is assumed to be supported on a simplex soup, i.e. on a union of
simplices of arbitrary dimension between $2$ and $d$. As in
[Aurenhammer, Hoffman, Aronov, Algorithmica \textbf{20} (1), 1998,
  61--76] we recast this optimal transport problem as the resolution
of a non-linear system where one wants to prescribe the quantity of
mass in each cell of the so-called \emph{Laguerre diagram}.  We prove
the convergence with linear speed of a damped Newton's algorithm to
solve this non-linear system. The convergence relies on two
conditions: (i) a genericity condition on the point cloud with respect
to the simplex soup and (ii) a (strong) connectedness condition on the
support of the source measure defined on the simplex soup.  Finally,
we apply our algorithm in $\Rsp^3$ to compute optimal transport plans
between a measure supported on a triangulation and a discrete measure.
We also detail some applications such as optimal quantization of a
probability density over a surface, remeshing or rigid point set
registration on a mesh.

%
%
\end{abstract}

\maketitle


\section{Introduction}

 In the last few years, optimal transport has received a lot of
 attention in mathematics (see e.g. \cite{villani2009optimal} and
 references therein), but also in computational geometry and in
 geometry processing because of the intimate connection between
 optimal transport maps for the quadratic cost and Power diagrams
 \cite{oliker1989numerical,aurenhammer1998minkowski,
   merigot2011multiscale, de2012blue,
   de2014intersection,levy2015numerical}.  By now, there exist
 efficient algorithms for computing the optimal transport between a
 piecewise-affine probability density on $\Rsp^d$ onto a finitely
 supported probability measure, a situation often referred to as
 semi-discrete optimal transport. In this article we look at a more
 singular setting where the source measure is not a probability
 density anymore, but is instead supported on a simplex soup, i.e.  a
 finite union of simplices in $\Rsp^d$. In the theoretical part of
 this article, we will allow the dimension of the simplices to range
 from $2$ to $d$. We call such a measure a \emph{simplicial
   measure}. The situation where one or more simplices in the soup
 have dimension strictly less than $d$ is difficult both in theory (as
 Brenier's theorem does not apply, and the optimal transport might not
 exist or not be unique) and in practice (the simplex could be
 included in the boundary of a Power cell, making the problem
 ill-posed).  Here, we propose a converging algorithm to solve the
 optimal transport problem in this degenerate setting.

\subsection{Optimal transport problem and Monge-Amp\`ere equation.} 
We first describe the general optimal transport problem between a
probability measure $\mu$ on $\Rsp^d$ and a probability measure $\nu$
supported on a point cloud of $\Rsp^d$. We always consider the
quadratic cost $c(x,y) = \nr{x-y}^2$.  The optimal transport problem
between $\mu$ and $\nu$ consists in finding a map $T: \Rsp^d\to Y$
that minimizes 
$
\int_{\Rsp^d} || x - T(x) ||^2
    d\mu(x) 
    $ 
    under the constraint that $ T_{\#} \mu =\nu$, where
    $ T_{\#} \mu $ denotes the pushforward of $ \mu $ by the map $ T
    $.
    When the target measure is finitely supported, i.e. $\nu =
    \sum_{1\leq i\leq N} \nu_i\delta_{y_i}$, this problem can be
    recast as a finite-dimensional non-linear system of equations
    involving the so-called Laguerre cells (see below)
    \cite{aurenhammer1998minkowski, gangbo1996geometry}. This idea can
    be traced back to Alexandrov and Pogorelov in
    convex geometry.

    More precisely, one can show that the optimal map $T : K \to Y$
    between $\mu$ and $\nu$ is of the form $T_\psi:x\mapsto
    \mathrm{argmin}_i \|x-y_i\|^2 + \psi_i$, where $(\psi_i)_{1\leq i
      \leq N}$ is a family of weights on $Y$. This implies that
    solving the optimal transport problem is equivalent to finding
    $\psi\in\Rsp^N$ such that $T_{\psi\#} \mu = \nu$. This last
    condition is equivalent to $G_i(\psi) = \nu_i$ for all $1\leq
    i\leq N$, where $G_i(\psi) := \mu(\Lag_i(\psi))$. Setting $G =
    (G_1,\hdots,G_N)$, the optimal transport problem between $\mu$ and
    $\nu=\sum_i \nu_i\delta_{y_i}$ amounts to the resolution of the
    finite-dimensional non-linear system of equations:
\begin{equation} \label{eq:MA}
    \tag{DMA}
  \hbox{ Find } \psi\in \Rsp^N \hbox{ such that }~~ G(\psi_1,\hdots,\psi_N) = (\nu_1,\hdots,\nu_N).
\end{equation}

\begin{remark}
When $\mu$ and $\nu$ are two probability densities on $\Rsp^d$,
Brenier's theorem asserts that $T = \nabla F$ is the gradient of a
convex function $F$. This function solves (in a suitable weak sense)
the non-linear differential equation $\nu(\nabla F(x)) \det(\D^2 F(x))
= \mu(x),$ which is called the \emph{Monge-Amp\`ere}
equation. Equation \eqref{eq:MA} can be regarded as a discretization
of this equation, hence the abbreviation.
\end{remark}
\begin{remark}
The non-linear system \eqref{eq:MA} admits a variational formulation,
which can be obtained as a consequence of Kantorovich's duality
theory, implying that $G$ is the gradient of a concave function,
e.g. \cite{aurenhammer1998minkowski,kitagawa2016newton}. We will not
use this fact here and don't develop this idea further. 
\end{remark}

From now on, we assume that the source measure $\mu$ is a simplicial
probability measure, as defined below.

\begin{definition}[Simplex soup]
\label{def:simplex-soup}
 A \emph{simplex soup} is a finite family $\Sigma$ of simplices of
 $\Rsp^d$. The dimension of
 a simplex $\sigma$ is denoted $d_\sigma$. The support of the
 simplex soup $\Sigma$ is the set $K = \cup_{\sigma\in \Sigma}
 \sigma$.
\end{definition}
\begin{definition}[Simplicial measure]
\label{def:simplicial-measure}
 We call \emph{simplicial measure} a measure $\mu = \sum_{\sigma \in
   \Sigma} \mu_\sigma$, where $\Sigma$ is a simplex soup, and where
 the measure $\mu_\sigma$ has density $\rho_\sigma$ with respect to
 the $d_\sigma$-dimensional Hausdorff measure on $\sigma$, i.e.  
$$
 \forall B\subseteq \Rsp^d\hbox{ Borel, } \mu(B)=\sum_{\sigma \in \Sigma} \int_{B \cap \sigma} \rho_\sigma(x) d\Haus^{d_\sigma}(x).
$$
\end{definition}

\subsection{Damped Newton's algorithm for semi-discrete optimal transport} 
We will solve the non-linear system \eqref{eq:MA} using the same
damped Newton's algorithm as in
\cite{mirebeau2015discretization,kitagawa2016newton}, which is
summarized in Algorithm \ref{algo:newton}. In this algorithm, we
denote by $A^{+}$ the \emph{pseudo-inverse} of the matrix $A$.  The
goal of this paper is to find conditions ensuring the convergence of
this algorithm in a finite number of steps. As usual for Newton's
methods, the convergence will be a natural consequence of the
$\Class^1$ regularity of $G$ and of a strict monotonicity property for
$D G$ (see Theorem~\ref{th:main} below). The strict monotonicity of
$G$ only holds near points $\psi\in \Rsp^N$ such that every Laguerre
cell contains a positive fraction of the mass, i.e. $\psi\in \K^+$
where
\begin{equation}
  \K^{+} = \{ \psi \in \Rsp^N \mid \forall i\in\{1,\hdots,N\},~ G_i(\psi) > 0\}. \label{eq:Kplus}
\end{equation}
The role of the damping step in Algorithm~1 (i.e. the choice of $\ell$
in the loop) is to ensure that $\psi^k$ always remain in $\K^+$. Also,
since $G$ is invariant under the addition of a constant to all weights, we
cannot expect \emph{strict} monotonicity of $G$ in all directions. We
 denote $\cstperp$ the orthogonal complement of the space of
constant functions on $ Y $ for the canonical scalar product on
$\Rsp^N$, i.e. $\cstperp = \{ v \in \Rsp^N \mid \sum_{1\leq i\leq N} v_i = 0 \}.$
Before summarizing the main properties of
$G$, we need an additional definition.

\begin{algorithm}[t]
\begin{description}
  \item[Input] A simplicial measure $\mu$, a finitely supported measure $\nu = \sum_{1\leq i\leq N} \nu_i\delta_{y_i}$,
    $\eta > 0$\\
\hspace{.37cm} A family of weights $\psi^0 \in\Rsp^N$ such that
      $\eps_0 := \min\left[\min_{i} G_i(\psi^0),~ \min_{i} \nu_i\right] > 0$
\item[While] $\nr{G(\psi^k) - \nu}  \geq \eta$
  \begin{itemize}
\item Compute $v^{k} = - \D G(\psi^k)^{+} (G(\psi^k) - \nu)$
\item Determine the minimum $\ell \in \Nsp$ such that $\psi^{k,\ell} :=
  \psi^k + 2^{-\ell} v^k$ satisfies
\begin{equation*}
\left\{
\begin{aligned}
&\min_{i} G_i(\psi^{k,\ell}) \geq \eps_0 \\
&\nr{G(\psi^{k,\ell}) - \nu} \leq (1-2^{-(\ell+1)}) \nr{G(\psi^k) - \nu}
\end{aligned}
\right.
\end{equation*}
\item Set $\psi^{k+1} = \psi^k + 2^{-\ell} v^k$ and $k\gets k+1$.
  \end{itemize}
\item[Output] A family of weights $\psi^k$ solving (\ref{eq:MA}) up to $\eta$, i.e. $\nr{G(\psi^k) - \nu} \leq \eta$.
\end{description}
\caption{Damped Newton's algorithm}
\label{algo:newton}
\end{algorithm}

\begin{definition}[Regular simplicial measure]
\label{def:reg-simplicial-measure}
A simplicial measure $\mu$ over $\bigcup_{\sigma\in\Sigma}\sigma$ is called \emph{regular} if
\begin{itemize}
\item the dimension of every simplex $ \sigma $ is $\geq 2.$
  \item for every $\sigma\in \Sigma$, $\rho_\sigma:\sigma\to\Rsp$ is
    continuous and $\min_{\sigma} \rho_\sigma > 0$.
  \item it is not possible to disconnect the support $K =
    \bigcup_{\sigma\in\Sigma} \sigma$ by removing a finite number of
    points, i.e.
    $\forall S\subseteq K \hbox{ finite, } K\setminus S \hbox{ is connected.}$
\end{itemize}
\end{definition}

\begin{theorem} \label{th:main}
    Assume $\mu$ is a regular simplicial measure and that the points $y_1,
    \ldots, y_n$ are in generic positions
    (according to Def.~\ref{def:Generic}). Then,
    \begin{itemize}
      \item  $G$ has class $\Class^1$
        on $\Rsp^N$.
        \item $G$ is strictly monotone in the following sense
          $$ \forall \psi\in\K^+, \forall v \in \cstperp\setminus\{0\},~~ \sca{\DD G(\psi) v}{v} < 0. $$
    \end{itemize}
\end{theorem}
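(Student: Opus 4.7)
The plan is to derive explicit integral formulas for the partial derivatives of $G$ from which both assertions follow, and then to reduce strict monotonicity to a graph-connectedness statement that uses the regularity hypothesis on $\mu$.

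\textbf{Step 1 ($\Class^1$ regularity).} I would decompose
$$G_i(\psi) = \sum_{\sigma \in \Sigma} \int_{\Lag_i(\psi) \cap \sigma} \rho_\sigma \, d\Haus^{d_\sigma}$$
and analyze each term separately. For each $\sigma\in\Sigma$ (with $d_\sigma \geq 2$), the trace of the Laguerre diagram on $\sigma$ is the intersection of a power diagram of $\Rsp^d$ with $\mathrm{aff}(\sigma)$. The genericity assumption (Def.~\ref{def:Generic}) will ensure that each bisecting hyperplane $H_{ij}(\psi)$ meets $\mathrm{aff}(\sigma)$ transversally whenever nontrivial, so that $\Lag_i(\psi) \cap \Lag_j(\psi) \cap \sigma$ is a convex $(d_\sigma-1)$-polytope depending continuously on $\psi$. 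A coarea-type computation then produces, for $i \neq j$,
$$\omega_{ij}(\psi) := \frac{\partial G_i}{\partial \psi_j}(\psi) = \sum_{\sigma\in\Sigma} \frac{1}{2\,\|\pi_\sigma(y_i) - \pi_\sigma(y_j)\|} \int_{\Lag_i(\psi)\cap\Lag_j(\psi)\cap\sigma} \rho_\sigma \, d\Haus^{d_\sigma - 1},$$
where $\pi_\sigma$ denotes orthogonal projection on $\mathrm{aff}(\sigma)$. Continuity in $\psi$ of each $\omega_{ij}$ is inherited from continuity of $\rho_\sigma$ (Def.~\ref{def:reg-simplicial-measure}) and Hausdorff-continuous variation of the integration domain, yielding $G \in \Class^1(\Rsp^N)$.

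\textbf{Step 2 (Laplacian structure).} The formula immediately gives $\omega_{ij}(\psi) = \omega_{ji}(\psi) \geq 0$. Combining this with the translation invariance $G(\psi + t\mathbf{1}) = G(\psi)$ (whence $\sum_j \partial_{\psi_j} G_i = 0$, so $\partial G_i / \partial \psi_i = -\sum_{j\neq i}\omega_{ij}$) produces the weighted-Laplacian identity
$$\sca{\DD G(\psi)\, v}{v} = -\tfrac{1}{2}\sum_{i\neq j}\omega_{ij}(\psi)\,(v_i - v_j)^2 \leq 0,$$
valid for all $v \in \Rsp^N$. Equality forces $v$ to be constant on every connected component of the graph $\mathcal{G}(\psi)$ whose edges are the pairs $(i,j)$ with $\omega_{ij}(\psi) > 0$. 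Consequently, strict monotonicity on $\cstperp$ reduces to the combinatorial statement: \emph{$\mathcal{G}(\psi)$ is connected whenever $\psi \in \K^+$.}

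\textbf{Step 3 (graph connectedness, the main obstacle).} I would argue by contradiction. Suppose $\mathcal{G}(\psi)$ splits into $I_1 \sqcup I_2$ with no crossing edge, and define $K_\alpha = \bigcup_{i \in I_\alpha}(\Lag_i(\psi)\cap K)$ for $\alpha=1,2$. Since $\psi \in \K^+$, each cell has positive $\mu$-mass, so $K_1, K_2$ are nonempty, closed, and cover $K$. The no-crossing-edge assumption reads $\omega_{ij}(\psi) = 0$ for all $(i,j) \in I_1 \times I_2$, so every interface $\Lag_i \cap \Lag_j \cap \sigma$ has vanishing $\Haus^{d_\sigma-1}$-measure. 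Under genericity this upgrades to: $\Lag_i \cap \Lag_j$ avoids the relative interior of every top-dimensional stratum $\sigma$. Iterating the same argument through the canonical stratification of $K$ by intersections of simplex faces, one shows by descent on dimension that $K_1 \cap K_2$ is a \emph{finite} set; removing it would then disconnect $K = K_1 \cup K_2$, contradicting the third bullet of Def.~\ref{def:reg-simplicial-measure} and proving the claim.

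The derivative computation in Step~1 is fairly standard once transversality is in place, and the reduction in Step~2 is a formal manipulation; the real difficulty lies in Step~3, where one must simultaneously exploit genericity of the $y_i$ (to promote zero-measure interfaces to empty ones on each stratum) and the topological robustness of $K$ under removal of finite point sets. This is also the only step that genuinely uses the simplicial structure of $\mu$ beyond the base manipulation of the derivative formula.
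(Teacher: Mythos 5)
Your architecture matches the paper's: Step 1 corresponds to Theorem~\ref{thm:regularity} (explicit derivative formulas via a transversality lemma), Step 2 is the standard weighted-Laplacian / Gershgorin reduction used in the proof of Theorem~\ref{thm:strict-concavity}, and Step 3 is the connectedness of the graph $\mathcal{G}$ induced by the Jacobian, which is indeed the crux. However, your Step~3 contains both a factual error and a genuine gap.

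The error: you claim that $\omega_{ij}(\psi)=0$ under genericity ``upgrades to: $\Lag_i\cap\Lag_j$ avoids the relative interior of every top-dimensional stratum $\sigma$.'' This is false. Under genericity, $\Lag_{i,j}(\psi)\cap\sigma$ may very well be a singleton sitting in the relative interior of $\sigma$ even when $\omega_{ij}(\psi)=0$. The paper's key lemma is exactly the dichotomy: if $\Lag_{i,j}(\psi)\cap\sigma\neq\emptyset$ then it is \emph{either a singleton or} has dimension $d_\sigma-1$ — and both cases happen. The singleton case is what forces the paper to work with $K^*=K\setminus S$, where $S$ is the (finite) set of all singleton intersections, and this is precisely why Definition~\ref{def:reg-simplicial-measure} demands that $K$ cannot be disconnected by removing \emph{any} finite set.

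The gap: you propose to prove that $K_1\cap K_2$ is finite by ``iterating the same argument through the canonical stratification of $K$ by intersections of simplex faces, by descent on dimension.'' In the paper there is no such descent. The dichotomy above is proved in one shot from the genericity condition \eqref{eq:Transverse}, which already quantifies over all facets $[x_{j_0},\ldots,x_{j_p}]$ of $\sigma$ and all ``facets'' $\Lag_{i_0,\ldots,i_\ell}(\psi)$ of the Laguerre cell; the proof requires the two technical lemmas about relative interiors and how adding a generator to one of the convex sets raises the dimension of the intersection (Lemma~\ref{lemma:ri} and Lemma~\ref{lemma:ri-generic}). A descent on dimension of strata is not a drop-in substitute: once you restrict to a lower-dimensional face of $\sigma$, you cannot reinvoke the same genericity relative to that face, and the measure $\mu$ charges only the top-dimensional Hausdorff measure of each $\sigma$, so lower strata carry no information about $\omega_{ij}$. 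Without the dichotomy lemma and the transversality toolkit behind it, Step~3 as written does not close. Finally, a minor point on Step~1: the interface $\Lag_i\cap\Lag_j\cap\sigma$ is not always a $(d_\sigma-1)$-polytope; it may be empty or degenerate to lower dimension, and proving $\Class^1$ regularity precisely at those degenerate parameter values is what the paper's Lemma~\ref{lemma:transversality} takes care of.
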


The statement of this theorem is similar to Theorems~1.3 and 1.4 in
\cite{kitagawa2016newton}. However, the results of
\cite{kitagawa2016newton} were established under the assumption that
the Laguerre cells induced by the cost function are convex in some
``$c$-exponential chart'', which is the discrete version of the
so-called Ma-Trudinger-Wang property
\cite{ma2005regularity,loeper2009regularity}. In the setting
considered here, the Laguerre cells can be disconnected, so that we
cannot expect them to be convex in any chart. Consequently, the
strategy used in \cite{kitagawa2016newton} cannot be applied here, and
we need to find an alternative way to establish the regularity of
$G$. What we show here is that a mild genericity assumption on the
points $y_1,\hdots,y_N$ ensures that $G$ is $\Class^1$ even when the
source measure is singular, i.e. supported over a lower-dimensional
subset of $\Rsp^d$. The price to pay for this, however, is that we do
not (and cannot expect to) get quantitative estimates on the speed of
convergence of the algorithm as in \cite{kitagawa2016newton}. In
particular, the existence of $\tau^*$ in the following theorem is
obtained through a compactness argument.

\begin{theorem} \label{coro:convergence-newton}
    Under the hypotheses of the previous theorem, the proposed Damped
    Newton's algorithm converges in a finite number of steps.
    Moreover, the iterates of Algorithm~\ref{algo:newton} satisfy
         $$ \nr{G(\psi^{k+1}) - \nu} \leq \left( 1 - \frac{\tau^\star}{2} \right) \nr{G(\psi^{k}) - \nu}, $$
    where $ \tau^* \in ]0,1] $ depends on $\mu,\nu$ and $\epsilon_0$.
\end{theorem}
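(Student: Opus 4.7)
The plan is to deduce convergence of Algorithm~\ref{algo:newton} from Theorem~\ref{th:main} by a classical compactness argument à la Kitagawa--Mérigot--Thibert, the only novelty being that no explicit modulus of strict monotonicity is available.

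First, I would check that the Newton update is well-defined and lies in the right subspace. Since both $\mu$ and $\nu$ are probability measures, $\sum_i G_i(\psi) = 1 = \sum_i \nu_i$, so $G(\psi)-\nu\in\cstperp$ at every iterate. By Theorem~\ref{th:main}, $\DD G(\psi)$ is symmetric negative definite on $\cstperp$ for every $\psi\in \K^+$, and the constant vector $\mathbf{1}$ lies in its kernel; hence $\DD G(\psi)^+$ maps $\cstperp$ bijectively onto $\cstperp$ and the Newton direction $v^k$ belongs to $\cstperp$. In particular all iterates stay in the affine space $\psi^0+\cstperp$.

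The heart of the argument is to work on the compact set
\[
\mathcal{C} := \Bigl\{ \psi \in \psi^0+\cstperp \;\Big|\; \min_i G_i(\psi) \geq \eps_0,\; \nr{G(\psi)-\nu}\leq \nr{G(\psi^0)-\nu}\Bigr\}.
\]
I would first show $\mathcal{C}$ is bounded: if some sequence $\psi^n\in\mathcal{C}$ had $\max_i\psi^n_i - \min_j\psi^n_j\to +\infty$, then the Laguerre cell $\Lag_j(\psi^n)$ of the index $j$ whose weight is smallest would shrink to the empty set, forcing $G_j(\psi^n)\to 0$ and contradicting the constraint $G_j\geq\eps_0>0$ (this uses the boundedness of $K$, or at least that the support of each simplex is bounded). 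Combined with the closed constraints $G_i\geq\eps_0$ and $\nr{G-\nu}\leq\nr{G(\psi^0)-\nu}$, which are closed by the continuity of $G$ provided by Theorem~\ref{th:main}, this yields compactness of $\mathcal{C}$. On $\mathcal{C}$, the continuous function
\[
\lambda(\psi) := -\max\bigl\{ \sca{\DD G(\psi)v}{v} : v\in \cstperp,\; \nr{v}=1\bigr\}
\]
is strictly positive by Theorem~\ref{th:main}, so it attains a positive minimum $\lambda^\star>0$, and by continuity of $\DD G$ the norms $\nr{\DD G(\psi)^+}$ are uniformly bounded on $\mathcal{C}$.

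Next I would analyze the damping loop. At any $\psi^k\in\mathcal{C}$, write $r(t):=\nr{G(\psi^k+t v^k)-\nu}$. Since $v^k=-\DD G(\psi^k)^+(G(\psi^k)-\nu)$ and $G(\psi^k)-\nu\in\cstperp$, a first-order Taylor expansion gives $G(\psi^k+tv^k) - \nu = (1-t)(G(\psi^k)-\nu) + o(t)$, whence $r(t)\leq (1-t/2)\,r(0)$ for all $t$ small enough. Using the uniform bound on $\nr{v^k}$ on $\mathcal{C}$, the uniform continuity of $G$ on a bounded neighbourhood of $\mathcal{C}$, and the fact that $G_i(\psi^k)\geq\eps_0$ with $G_i$ continuous, both conditions of the inner loop are satisfied as soon as $t=2^{-\ell}$ is smaller than some threshold $t^\star>0$ independent of $k$. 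Thus the loop terminates with $2^{-\ell}\geq t^\star/2$, and setting $\tau^\star:=t^\star$ I get
\[
\nr{G(\psi^{k+1})-\nu} \;\leq\; \Bigl(1-\tfrac{\tau^\star}{2}\Bigr)\nr{G(\psi^k)-\nu},
\]
which also guarantees $\psi^{k+1}\in\mathcal{C}$ so that the induction proceeds. Linear contraction with rate $<1$ then forces the stopping criterion $\nr{G(\psi^k)-\nu}<\eta$ to be reached in at most $\lceil\log(\nr{G(\psi^0)-\nu}/\eta)/\log(1/(1-\tau^\star/2))\rceil$ iterations.

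The main obstacle is the compactness of $\mathcal{C}$: without a quantitative modulus of strict monotonicity (as noted in the paper's discussion following Theorem~\ref{th:main}), one cannot avoid extracting $\tau^\star$ from a compactness/continuity argument, and for that one must rule out the weights drifting to infinity while keeping every Laguerre cell massive. The argument sketched above — that an unbounded spread of weights empties the cell of the index with the lowest weight — is the key geometric ingredient, and it is precisely where the assumptions on the simplex soup (in particular the boundedness of $K$ and the positivity of $\rho_\sigma$) enter.
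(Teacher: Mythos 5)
Your approach is essentially the same as the paper's: the paper factors the argument into a general lemma (Proposition~\ref{prop:convergence-newton}) whose three hypotheses (compactness of the sublevel set, $\Class^1$ regularity, strict monotonicity on $\cstperp$) are then checked, whereas you inline the whole compactness/uniform-continuity argument; the substance and the source of $\tau^\star$ (a compactness argument replacing any quantitative modulus) are identical.

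Two small slips, both easily repaired. First, the orientation of the Laguerre-cell argument is reversed: since $\Lag_j(\psi)=\{x:\nr{x-y_j}^2+\psi_j\leq\nr{x-y_i}^2+\psi_i\;\forall i\}$, it is the cell of the index with the \emph{largest} weight (not the smallest) that shrinks as the spread $\max_i\psi_i-\min_j\psi_j$ diverges. The paper's quantitative version is cleaner and avoids the asymptotic phrasing: for any $x\in\Lag_i(\psi)$ one has $\psi_i-\psi_j\leq\nr{x-y_j}^2-\nr{x-y_i}^2\leq\diam(K\cup Y)^2$, which together with the non-emptiness of every $\Lag_i$ (from $G_i\geq\eps_0$) and the normalization $\sum_i\psi_i=a$ gives boundedness directly. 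Second, in the Taylor step the relevant uniform continuity is that of $\DD G$, not of $G$: the integral remainder is $R(t)=\int_0^t(\DD G(\psi+sv)-\DD G(\psi))v\,ds$, so what you need is a uniform modulus of continuity for $\DD G$ on the compact set (available since $G$ is $\Class^1$), not a modulus for $G$ itself. You should also note, as the paper does, the edge case $G(\psi)=\nu$ where $v(\psi)=0$, to keep the function $\psi\mapsto\tau(\psi)$ continuous on all of the compact set. None of this changes the structure or the conclusion.
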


As we will see in Section 5, the behaviour of
Algorithm~\ref{algo:newton} seems better in practice: the number of
Newton's iterations is small even for large point sets. In our
numerical examples, the number of iterations never exceeds $16$.

\subsubsection{Related work.} The problem of optimal transport
between a probability density on $\Rsp^d$ and a finitely supported
measure has been considered in many works, and can be traced back to
Alexandrov and Pogorelov. The authors of \cite{oliker1989numerical}
proposed and analysed a coordinatewise-increment algorithm for a
problem similar but not quite equivalent to optimal transport --
namely, a Monge-Ampère equation with Dirichlet boundary
conditions. This coordinatewise-increment approach was extended to an
optimal transport setting in \cite{caffarelli1999problem}, leading to
a $\mathrm{O}(N^3/\eta)$ algorithm where $N$ is the number of Dirac
masses and $\eta$ is the desired error. Aurenhammer, Hoffmann and
Aronov \cite{aurenhammer1998minkowski} proposed a variational
formulation for semi-discrete optimal transport, but do not analyse its
algorithmic consequences further. This variational formulation was
combined with quasi-Newton \cite{merigot2011multiscale,levy2015numerical} or Newton's
\cite{de2012blue,su2013area} methods with good experimental results
but without convergence analysis. The convergence of a \emph{damped}
  Newton's algorithm was established first in
\cite{mirebeau2015discretization} for the Monge-Ampère equation with
Dirichlet condition and was extended to optimal transport for cost
functions satisfying the so-called Ma-Trudinger-Wang condition in
\cite{kitagawa2016newton}. None of these works deal with the singular
setting that we consider here, where the source measure might be
supported on a lower-dimensional subset of $\Rsp^d$. In particular, we
underline that in order to deal with surfaces embedded in $\Rsp^3$,
the authors of \cite{su2013area} first map them conformally in the
plane $\Rsp^2$.

\subsubsection{Applications.} We can apply our result to different settings where
the source and target measures are concentrated on lower-dimensional objects. 
We investigate at the end of this article applications such as \emph{optimal quantization} of a probability density over a surface, \emph{remeshing} or \emph{point set registration} on a
mesh. Another interesting application that we do not develop here is the optimal transport problem between
measures concentrated on graphs of functions~\cite{ma2005regularity}, which are lower-dimensional subsets of $ \Rsp^d $. Such problems occur for instance in signal analysis and machine
learning~\cite{thorpe2016transportation}. The cost involved in this setting is
of the form $ c(x,y)=|| x-y ||^2 + |f(x)-g(y)|^2 $. When the functions $f$ and
$g$ are strictly convex and their gradients are less than one,  the cost $c$
satisfies the Ma-Trudinger-Wang condition \cite{ma2005regularity} and we can
apply the results of \cite{kitagawa2016newton}. When $f$ and $g$ do not satisfy
these assumptions, our result shows that the damped Newton's algorithm still converges. 
%

\subsubsection{Outline.}
In Section \ref{sec:transport-plan}, we show the relation between solutions of
\eqref{eq:MA} and optimal transport plans. In Section \ref{sec:regularity},
we establish the regularity of the function $ G
$. Section \ref{sec:motonicity} is devoted to the proof of the strict motonicity
of $ G $. In Section \ref{sec:convergence}, we combine
the intermediate results to show the convergence of the damped Newton's
algorithm (Theorem \ref{coro:convergence-newton}). In Section
\ref{sec:numerical-results}, we present numerical
illustrations and applications of this algorithm.


\subsection*{Acknowledgements}
This work has been partially supported by the LabEx PERSYVAL-Lab
(ANR-11-LABX-0025-01) funded by the French program Investissement d’avenir and
by ANR-16-CE40-0014 - MAGA - Monge Amp\`ere et G\'eom\'etrie Algorithmique.
\section{Optimal transport problem} \label{sec:transport-plan}
In this section, we show that the optimal transport problem considered
in this paper amounts to solving the system \eqref{eq:MA}. The results
mentioned here are very classical when the source measure is supported
on a full dimensional subset of $\Rsp^d$. Here, in order to handle
lower-dimensional simplex soups, we need to introduce a notion of
genericity. In the following, we denote by $ [x_0, \ldots, x_k] $ the
convex hull of the points $x_0, \ldots, x_k$.

\begin{definition}[Generic point set]\label{def:Generic}
A point set $\{ y_1, \ldots, y_N \} \subset \Rsp^d$ is in
\emph{generic position} with respect to a $k$-dimensional simplex
$\sigma=[x_0,\hdots,x_{k}]$ if the following condition holds
for every integer $p\in\{1,\hdots,\jocelyn{k}\}$, every $\ell
  \in\{1,\hdots,\min(d,N-1)\}$, every distinct $i_0,\hdots,i_\ell
  \in \{1,\hdots,N\}$ and every distinct
  $j_0,\hdots,j_p\in\{\jocelyn{0},\hdots,k\}$:
  \begin{equation}\label{eq:Transverse}
    \dim(\jocelyn{\{y_{i_1} - y_{i_0},\hdots,y_{i_{\ell}} - y_{i_0}\}}^{\perp} \cap \vect(x_{j_1} - x_{j_0}, \hdots, x_{j_p} - x_{j_0}))
    = \max(p-\ell, 0)
  \end{equation}
The point set is in generic position with respect to a simplex soup $K
= \cup_{\sigma\in \Sigma}\sigma$ if it is in generic position with
respect to all the simplices $\sigma\in\Sigma$.
\end{definition}
\begin{definition}[Power diagram]
  The $i$th power cell induced by weights $\psi\in \Rsp^N$ on
  a point set $\{y_1,\hdots,y_{N}\}$ is defined by
  $$\Pow_i(\psi):=\{ x \in \Rsp^d \mid \forall j \in \{1, \ldots, n\}, \nr{x - y_i}^2 + \psi_i \leq \nr{x - y_j}^2 + \psi_j \}.$$
\end{definition}

\begin{remark} Note that Laguerre cells are intersections of Power cells with the simplex soup, namely 
\begin{equation} \label{eq:lag}
\Lag_i(\psi) = \Pow_i(\psi) \cap K.
\end{equation}
  Condition \eqref{eq:Transverse} ensures in particular that for any
  choice of weights $(\psi_i)_{1\leq i\leq N}$ the
  $(d-\ell)$-dimensional facets of the Power diagram induced by
  $(y_i)_{1\leq i\leq N}, (\psi_i)_{1\leq i\leq N}$ intersect the
  $p$-dimensional facets of $\sigma$ in a trivial way, when
  $(d-\ell)+p\leq d$.
\end{remark}

We also need the following technical lemma that states that, under genericity, the Laguerre cells form a partition of a simplex soup almost everywhere. 

\begin{lemma}\label{lemma:lagij}
  Assume that $\mu$ is a simplicial measure and that $y_{1},\hdots,
  y_N$ is in generic position (Def~\ref{def:Generic}). Let
  $\psi\in\Rsp^N$ and define $\Lag_{i,j}(\psi) = \Lag_i(\psi)\cap\Lag_j(\psi)$ Then, 
  $$ \forall i\neq j,~~\mu(\Lag_{i,j}(\psi))
  = 0 \hbox{ and } \forall i,~~ \mu(\partial \Lag_i(\psi)) = 0.$$
\end{lemma}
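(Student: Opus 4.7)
The plan is to reduce both equalities to the single geometric observation that, under the genericity hypothesis, every weighted bisector hyperplane cuts each simplex of the soup in a set of strictly smaller dimension.

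First I would note that $\Lag_{i,j}(\psi) \subseteq H_{ij}$, where
$$ H_{ij} := \{x \in \Rsp^d \mid \nr{x-y_i}^2 + \psi_i = \nr{x-y_j}^2 + \psi_j\} $$
is the affine hyperplane with normal direction $y_j - y_i$. By Definition~\ref{def:simplicial-measure} it therefore suffices to show that $\Haus^{d_\sigma}(H_{ij}\cap\sigma) = 0$ for every $i\neq j$ and every $\sigma = [x_0,\ldots,x_{d_\sigma}] \in \Sigma$.

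The main step is then to apply condition \eqref{eq:Transverse} with $\ell = 1$, $(i_0,i_1)=(i,j)$, $p = d_\sigma$ and $(j_0,\ldots,j_p) = (0,1,\ldots,d_\sigma)$. Writing $V_\sigma := \vect(x_1-x_0,\ldots,x_{d_\sigma}-x_0)$ for the direction of the affine hull of $\sigma$, the genericity assumption gives
$$ \dim\bigl( \{y_j-y_i\}^\perp \cap V_\sigma \bigr) = d_\sigma - 1. $$
In particular $V_\sigma \not\subseteq \{y_j-y_i\}^\perp$, so the affine hull of $\sigma$ is not parallel to $H_{ij}$, and $H_{ij}$ meets it either trivially or in a $(d_\sigma-1)$-dimensional affine subspace. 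In either case $\Haus^{d_\sigma}(H_{ij}\cap\sigma) = 0$, and summing against the locally bounded density $\rho_\sigma$ yields $\mu(\Lag_{i,j}(\psi)) = 0$.

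For the second equality I would control $\partial \Lag_i(\psi)$ by the boundary of the ambient Power cell: by closedness of $\Pow_i(\psi)$, any $x \in \partial \Lag_i(\psi)$ whose every neighbourhood meets $K\setminus \Pow_i(\psi)$ must lie in $\partial \Pow_i(\psi) \subseteq \bigcup_{j\neq i} H_{ij}$. Combined with the fact that $\mu$ is carried by $K$, this gives $\partial \Lag_i(\psi) \cap K \subseteq \bigcup_{j\neq i}(H_{ij}\cap K)$, and the finiteness of $N$ together with the previous paragraph yields $\mu(\partial \Lag_i(\psi)) = 0$.

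The only delicate point, really the whole content of the lemma, is to recognise that the $\ell = 1$, $p = d_\sigma$ instance of \eqref{eq:Transverse} is exactly the transversality of each simplex to an arbitrary weighted bisector; once this dictionary between the combinatorial genericity condition and the required geometric transversality is set up, the first equality is a one-line dimension count and the second a routine topological bound.
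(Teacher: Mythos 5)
Your proof is correct and takes essentially the same route as the paper's: both apply the genericity condition with $\ell=1$ and $p=d_\sigma$ to show that the bisector direction $\{y_j-y_i\}^\perp$ meets the tangent space of each simplex in dimension $d_\sigma-1$, conclude $\Haus^{d_\sigma}(\Lag_{i,j}(\psi)\cap\sigma)=0$, and then bound the (relative) boundary of each Laguerre cell by the union of pairwise intersections. Your explicit introduction of the hyperplanes $H_{ij}\supseteq\Lag_{i,j}(\psi)$ is a cosmetic rather than substantive difference from the paper's argument.
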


\begin{proof} Let $\sigma=[x_0,\hdots,x_k]$ be a $k$-dimensional simplex in the support of $\mu$.
  Then, from the genericity assumption, one has $\dim(\vect(x_1 -
  x_0,\hdots,x_k-x_0) \cap \{y_i - y_j\}^\perp) = k - 1,$ so that in
  particular $\dim(\sigma \cap \Lag_{i,j}(\psi)) \leq k-1$. This gives
  $$ \mu_\sigma(\Lag_{i,j}(\psi)) = \int_{\sigma \cap \Lag_{i,j}(\psi)}
  \rho_\sigma(x) \dd \Haus^k(x) \dd x = 0. $$ Summing these equalities
  over $\sigma \in \Sigma$, we get $\mu(\Lag_{i,j}(\psi)) = 0$. The second equality then follows from
  $\partial \Lag_i(\psi) \subseteq \bigcup_{j\neq i} \Lag_{i,j}(\psi)$.
\end{proof}

\begin{definition}[Transport map]
  Let $\mu,\nu$ be two probability measures on $\Rsp^d$, and assume
  that $\nu$ is supported over a finite set $Y=\{y_1,\hdots,y_N\}$,
  i.e. $\nu = \sum_{1\leq i\leq N} \nu_i \delta_{y_i}$. A map $T: K
  \to Y$ is called a \emph{transport map} between $\mu$ and $\nu$ if
  $$ \forall i\in\{1,\hdots,N\},~ \mu(T^{-1}(y_i)) = \nu_i. $$
\end{definition}

The relation between solutions of \eqref{eq:MA} and optimal transport maps
is explained in the following proposition.

\begin{proposition} \label{prop:optimality-laguerre} Let  $\mu$ be a simplicial measure supported on $K$, and let  $y_{1},\hdots,
  y_N$ be in generic position (Def~\ref{def:Generic}). If $\psi \in
  \Rsp^N$ satisfies \eqref{eq:MA}, then, the map
  $$ T_\psi: x \in K\mapsto \arg\min_{i} \nr{x - y_i}^2 +\psi_i.$$ is
  well-defined $\mu$-a.e. and is an optimal transport
  map between $\mu$ and $\nu$.
  \label{prop:OT}
\end{proposition}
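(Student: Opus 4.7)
The plan is to handle the three assertions in the statement in order: well-definedness of $T_\psi$, the transport condition $T_{\psi\#}\mu=\nu$, and optimality among all transport maps. The first two follow almost immediately from Lemma~\ref{lemma:lagij}, while optimality is a classical argument based on the defining inequality of the Laguerre cells.

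First I would observe that the argmin defining $T_\psi(x)$ fails to be unique only when $x$ belongs to $\Lag_{i,j}(\psi)$ for some $i\neq j$. Since there are only $\binom{N}{2}$ such pairs, the set $B=\bigcup_{i\neq j}\Lag_{i,j}(\psi)$ is a finite union of $\mu$-null sets by Lemma~\ref{lemma:lagij}, hence $\mu(B)=0$. Thus $T_\psi$ is well-defined on $K\setminus B$, which has full $\mu$-measure. For any point $x$ in the interior of $\Lag_i(\psi)$ relative to $K$ (and more generally for any $x\in\Lag_i(\psi)\setminus B$), we have $T_\psi(x)=y_i$, so $T_\psi^{-1}(y_i)$ coincides with $\Lag_i(\psi)$ up to a $\mu$-null set. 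Consequently
\[
\mu(T_\psi^{-1}(y_i))=\mu(\Lag_i(\psi))=G_i(\psi)=\nu_i
\]
by hypothesis \eqref{eq:MA}, which shows that $T_\psi$ is a transport map from $\mu$ to $\nu$.

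For optimality, I would take an arbitrary transport map $S:K\to Y$ and compare the transport costs using the Laguerre inequality. By definition of $T_\psi$, for $\mu$-a.e.\ $x\in K$ we have $\|x-T_\psi(x)\|^2+\psi_{T_\psi(x)}\leq \|x-S(x)\|^2+\psi_{S(x)}$. Integrating against $\mu$,
\[
\int_K \|x-T_\psi(x)\|^2\,d\mu(x)\;\leq\;\int_K \|x-S(x)\|^2\,d\mu(x)+\int_K \bigl(\psi_{S(x)}-\psi_{T_\psi(x)}\bigr)\,d\mu(x).
\]
Both $S$ and $T_\psi$ push $\mu$ forward to $\nu$, so $\int \psi_{S(x)}\,d\mu=\int\psi_{T_\psi(x)}\,d\mu=\sum_i \psi_i\nu_i$, and the last integral vanishes. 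This yields $\int \|x-T_\psi(x)\|^2\,d\mu\leq \int \|x-S(x)\|^2\,d\mu$ for every transport map $S$, proving optimality.

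The only subtle point is the first step, and the subtlety is already absorbed by Lemma~\ref{lemma:lagij}, which itself relies on the genericity hypothesis of Definition~\ref{def:Generic}; without genericity, the overlap sets $\Lag_{i,j}(\psi)$ could carry positive $\mu$-mass (for instance if a simplex lies inside a bisector between two sites), and $T_\psi$ would not be well-defined $\mu$-a.e. Once this is in place, the argument is entirely standard and uses only the pointwise inequality defining the Power cells together with the pushforward identity $G_i(\psi)=\nu_i$.
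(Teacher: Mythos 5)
Your proof is correct and follows essentially the same route as the paper: well-definedness via Lemma~\ref{lemma:lagij}, then the pointwise Laguerre inequality integrated against $\mu$ and the cancellation of the $\psi$-terms using the pushforward identity. The only difference is that you explicitly verify $T_{\psi\#}\mu=\nu$ from $G_i(\psi)=\nu_i$, a small step the paper leaves implicit.
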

\begin{proof}
  The fact that $T_\psi$ is well-defined almost everywhere follows
  from Lemma~\ref{lemma:lagij}. Denote $\psi(y_i) := \psi_i$. Then, by
  definition of $T_\psi$, one has $\nr{x - T_\psi(x)}^2 +
  \psi(T_\psi(x)) \leq \nr{x - T(x)}^2 + \psi(T(x))$. Integrating this
  inequality gives
  $$ \int_{K} \nr{x - T_\psi(x)}^2 + \psi(T_\psi(x)) \dd \mu(x) \leq
  \int_{K} \nr{x - T(x)}^2 + \psi(T(x)) \dd \mu(x).$$ Since
  $T$ and $T_\psi$ are both transport maps between $\mu$ and $\nu$, a
  change of variable gives
  $$ \int_{K} \psi(\jocelyn{T_\psi(x}))  \dd \mu(x) = \sum_{1\leq i\leq N} \psi_i \nu_i = \int_{K} \psi(T(x))  \dd \mu(x).$$
  Subtracting this equality from the inequality above directly gives  the result.
\end{proof}

The goal of this article is to show the convergence of an algorithm able to
eficiently solve the system \eqref{eq:MA}. This relies on the regularity and a notion of strict monotonicity of the function $G$ that are studied in the following sections. 
%

\section{$\Class^1$ regularity of $G$}\label{sec:regularity}
The main result of this section is the following theorem that states that under genericity conditions, the function
$G:\Rsp^N\to\Rsp^N$ appearing in \eqref{eq:MA} is of class $\Class^1$.

\begin{theorem}\label{thm:regularity}
Let $\mu$ be a regular simplicial measure supported on a simplex soup
$\Sigma$ (as in Definition~\ref{def:reg-simplicial-measure}) and let $Y =
\{ y_1, \ldots, y_N \}$ be a generic point set. Then,
\begin{itemize}
  \item the function
$G$ appearing in \eqref{eq:MA} has class $\Class^1$ on $\Rsp^N$; 
\item denoting $\Lag_{i,j}(\psi) := \Lag_{i}(\psi) \cap \Lag_{j}(\psi)$, the derivatives of $G$ are given by
\begin{equation}
\begin{cases} \label{eq:HessPhi}
\frac{\partial G_i}{\partial \psi_j}(\psi) = 
\frac{1}{2 \nr{\jocelyn{\Pi_{\sigma^0}(y_i - y_j)}}} 
\sum_{\sigma\in \Sigma} \int_{\Lag_{i, j}(\psi) \cap \sigma} \rho_\sigma(x) \dd\Haus^{d_\sigma-1}(x) 
&\forall i\neq j\\
\frac{\partial G_i}{\partial \psi_i}(\psi) =
-\sum_{j\neq i} \frac{\partial^2 \jocelyn{G_i}}{\partial \jocelyn{\psi_j}}(\psi) &\forall i.
\end{cases}
\end{equation}
\jocelyn{where $ \Pi_{\sigma^0} : \Rsp^d \to \sigma^0 $ denotes the orthogonal
    projection on the linear subspace $ \sigma^0 $ tangent to $ \sigma $.}
\end{itemize}
\end{theorem}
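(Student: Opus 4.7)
The plan is to reduce to a single simplex of the soup and then adapt the classical bisector-sweep computation to the affine hull of that simplex, with the genericity condition \eqref{eq:Transverse} providing both well-posedness and, crucially, continuity of the derivatives.

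First, by linearity one has $G_i = \sum_{\sigma \in \Sigma} G_{i,\sigma}$ with $G_{i,\sigma}(\psi) := \mu_\sigma(\Lag_i(\psi))$, and since $\Sigma$ is finite it suffices to establish the $\Class^1$ regularity of each $G_{i,\sigma}$ together with the corresponding partial derivative formula. Fix $\sigma \in \Sigma$ of dimension $k := d_\sigma$ and work within its affine hull. There, $\Lag_i(\psi) \cap \sigma$ is the intersection of $\sigma$ with the convex polytope cut out by the affine half-spaces $\sca{\Pi_{\sigma^0}(y_j - y_i)}{x} \leq c_{ij}(\psi)$ for $j \neq i$, where $c_{ij}$ is affine in $\psi$. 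The case $p = \ell = 1$ of \eqref{eq:Transverse} ensures $\Pi_{\sigma^0}(y_i - y_j) \neq 0$ for $i \neq j$, so every such half-space is proper in $\mathrm{aff}(\sigma)$.

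To obtain the off-diagonal derivative I would run a standard bisector-sweep argument: for $j \neq i$, the perturbation $\psi_j \mapsto \psi_j + t$ translates the bisector $B_{ij}(\psi) = \{x \in \mathrm{aff}(\sigma) : \nr{x - y_i}^2 + \psi_i = \nr{x - y_j}^2 + \psi_j\}$ by $t/(2\nr{\Pi_{\sigma^0}(y_i - y_j)})$ along its unit normal in $\sigma^0$. A Fubini/coarea decomposition along this normal direction yields
\[ G_{i,\sigma}(\psi + t e_j) - G_{i,\sigma}(\psi) = \frac{t}{2\nr{\Pi_{\sigma^0}(y_i - y_j)}} \int_{\Lag_{i,j}(\psi) \cap \sigma} \rho_\sigma \dd\Haus^{k-1} + o(t), \]
which is the stated off-diagonal formula. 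The diagonal formula then follows from the invariance $G_i(\psi + c(1,\ldots,1)) = G_i(\psi)$, which forces $\sum_j \partial G_i / \partial \psi_j = 0$.

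The main obstacle is continuity of these partial derivatives, i.e., continuity in $\psi$ of $H_{i,j,\sigma}(\psi) := \int_{\Lag_{i,j}(\psi) \cap \sigma} \rho_\sigma \dd\Haus^{k-1}$. The set $\Lag_{i,j}(\psi) \cap \sigma$ is a convex polytope of dimension at most $k - 1$, whose defining affine conditions depend continuously on $\psi$. Here condition \eqref{eq:Transverse} plays its decisive role: for each pair $(p, \ell)$ it guarantees that the linear systems selecting a face of this polytope — obtained by intersecting $B_{ij}(\psi)$ with $\ell - 1$ additional bisectors and a $p$-face of $\sigma$ — are non-degenerate, so the vertices and faces of the polytope depend continuously, in fact piecewise-algebraically, on $\psi$. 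At any $\psi$ where the combinatorial type of the polytope changes, genericity forbids $B_{ij}(\psi)$ from containing a face of $\sigma$ of dimension $\geq k - 1$, so the appearing/disappearing features have dimension $< k - 1$ and contribute zero to $\Haus^{k-1}$. Combined with the continuity of $\rho_\sigma$, this yields continuity of $H_{i,j,\sigma}$ and hence of $\partial G_i / \partial \psi_j$.
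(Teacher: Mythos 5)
Your plan follows the same broad route as the paper: reduce to a single simplex by linearity, identify $\Lag_i(\psi)\cap\sigma$ with a polytope in $\mathrm{aff}(\sigma)$ cut out by inequalities affine in $\psi$, get the off-diagonal derivative from the coarea formula, and invoke genericity to get continuity of that derivative. Your derivation of the diagonal formula from the invariance of $G_i$ under constant shifts of $\psi$ is a minor but perfectly valid simplification of the paper's direct chain-rule computation.

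The gap is in the crucial continuity argument. You assert that because the linear systems picking out faces of $\Lag_{i,j}(\psi)\cap\sigma$ are non-degenerate, ``the vertices and faces of the polytope depend continuously, in fact piecewise-algebraically, on $\psi$,'' and that across a change of combinatorial type the appearing/disappearing features have dimension $<k-1$ and hence contribute nothing to $\Haus^{k-1}$. This is a heuristic, not a proof: it does not establish that the $(k-1)$-dimensional volume of the moving polytope is a continuous function of the constraints, which is precisely what is needed and is not automatic (the polytope changes combinatorics, lives in a moving hyperplane, and could in principle collapse or acquire a slab). The paper handles this carefully in Lemma~\ref{lemma:transversality}: it first proves a clean, elementary statement (part one) that $\lambda\mapsto\int_{\Kloc(\lambda)}\rho$ is locally Lipschitz when all the constraint normals $z_i$ are non-zero, by bounding the symmetric difference $\Kloc(\lambda)\Delta\Kloc(\lambda+te_i)$ inside a slab of width $t/\nr{z_i}$; it then proves part two by parameterizing the facet $\Kloc_{i_0}(\lambda)$ by projection onto $\{z_{i_0}\}^\perp$, splitting the difference of integrals into a density-variation term (controlled by uniform continuity of $\rho$) and a domain-variation term, and showing the latter is continuous by feeding the projected constraint vectors $\tilde z_i=\Pi_{\{z_{i_0}\}^\perp}(z_i)$ --- non-zero thanks to pairwise independence, itself a consequence of genericity via Lemma~\ref{lemma-genericity} --- back into part one. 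You should replace the sketch about vertex continuity and feature dimensions by a proof along these lines, or at least isolate and prove the statement that a $(k-1)$-dimensional polytopal integral with continuously varying, non-degenerate defining half-spaces depends continuously on the parameters; as written, the central claim of continuity of $\partial G_i/\partial\psi_j$ is asserted rather than established.
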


\begin{remark}  Note that in contrast with Theorem~{4.1} in
  \cite{kitagawa2016newton}, the map $G$ is continuous on the whole
  space $\Rsp^N$ and not only on the set $\K^+$ defined in
  \eqref{eq:Kplus}. Without the genericity hypothesis, one cannot hope
  a global regularity result of this kind.
  \begin{itemize} \item Let $\mu$ be the uniform probability measure on $K = [0,1]^2\subseteq \Rsp^2$
    (union of two triangles), and let $y_1 = (\frac{1}{2},0)$, $y_2 = (-\frac{1}{2},0)$
    and $y_3 = (1,0)$. Set $\psi^t = (0,t,0)$. Then,
    $$\frac{\partial G_1}{\partial \psi_3}(\psi^t) =
    \Haus^1(K \cap \Lag_1(\psi^t) \cap \Lag_3(\jocelyn{\psi^t})) = \begin{cases}
        0 &\hbox{when } t > \frac{\jocelyn{-6}}{4}\\
      1 &\hbox{when }  t < \frac{\jocelyn{-6}}{4},
    \end{cases}$$
    thus showing that $G$ is not globally $\Class^1$.
    \item The regularity hypothesis would never be satisfied when one of
      the simplex is one-dimensional, thus explaining the first
      hypothesis in our definition of regular simplicial measure
      (Def. \ref{def:reg-simplicial-measure}). Note also that this lack of genericity translates
      into a lack of regularity for $G$. Indeed, take $\mu$ the
      uniform measure over a segment $[a,b]$. Then, the partial derivative
      $$ \frac{\partial G_i}{\partial \psi_j}(\psi) =
      \Haus^0(\Lag_{i}(\psi) \cap \Lag_j(\psi) \cap [a,b]) =
      \Card(\Lag_{i}(\psi) \cap \Lag_j(\psi) \cap [a,b])),$$ can only
      take values in $\{0,1\}$ and must be discontinuous or constant.
\end{itemize}
\end{remark}

The end of this section if devoted to the proof of Theorem~\ref{thm:regularity}. We first remark that by linearity of the integrals in the definition
of $G$ with respect to $\mu$, the theorem will hold for a simplicial
measure if it holds for any measure with density supported on a
simplex. We therefore let $\sigma$ be a $k$-dimensional simple of
$\Rsp^d$ and $\mu = \mu_\sigma$ be a measure on $\sigma$ with
continuous density $\rho_\sigma:\sigma\to\Rsp$ with respect to the
$k$-dimensional Hausdorff measure on $\sigma$. We also introduce 
\begin{equation} \label{eq:GradPhi-sigma}
 G_{\sigma,i}(\psi) := \int_{\Lag_i(\psi)\cap \sigma} \rho_\sigma(x) \dd\Haus^{k}(x) \dd x.
\end{equation}
The following lemma will be used to compute the first derivatives of
the function $G_{\sigma,i}$.

\begin{lemma}\label{lemma:transversality} 
Let $\rho:\Rsp^k\to\Rsp$ be a continuous function on $\Rsp^k$ and let
$z_1,\cdots,z_N \in \Rsp^k$ be vectors whose conic hull is
$\Rsp^k$ (i.e. $\forall x\in \Rsp^k,\exists
\lambda_1,\hdots,\lambda_\jocelyn{N} \geq 0$ s.t. $x = \sum_i \lambda_i z_i$). Given
$\lambda\in\Rsp^k$, define
\begin{align}  \label{eq:K-loc}
&\Kloc(\lambda) := \{ x \in \Rsp^k \mid \forall i \in \{1, \ldots, N \},~
\sca{x}{z_i} \leq \lambda_i \},\\
\label{eq:G-loc}
&\Gloc(\lambda) := \int_{\Kloc(\lambda)} \rho(x) \dd\Haus^k(x).
\end{align}  Then,
\begin{itemize}
\item Assume that the $z_i$ are non-zero. Then, the
  function $\Gloc$ is continuous.
\item Assume that all the vectors $z_i$ are pairwise independent
  (i.e. not collinear, implying in particular that they are
  non-zero). Then $\Gloc$ has class $\Class^1$ and its partial
  derivatives are 
\begin{equation}
\frac{\partial \Gloc}{\partial \lambda_i}(\lambda) = \frac{1}{\nr{z_i}}
\int_{\jocelyn{\Kloc(\lambda)} \cap \{ x\mid \sca{x}{z_i} = \lambda_i\}} \rho(x) \dd\Haus^{k-1}(x)
\end{equation}
\end{itemize}
\end{lemma}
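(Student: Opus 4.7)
The plan is to deduce both statements from a single geometric principle: the symmetric difference of two nearby polytopes $\Kloc(\lambda)$ and $\Kloc(\lambda')$ is contained in a union of thin slabs, one per constraint, and peeling off one coordinate via Fubini converts the measure of the relevant slab into the surface integral appearing in the derivative formula.

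For continuity, I first invoke the conic-hull hypothesis: its polar cone is trivial, so the recession cone of every $\Kloc(\lambda)$ reduces to $\{0\}$ and each $\Kloc(\lambda)$ is compact. Since $\Kloc$ is monotone in $\lambda$, all $\Kloc(\lambda')$ for $\lambda'$ near $\lambda$ lie in a common compact ball $B$. The symmetric difference $\Kloc(\lambda)\triangle \Kloc(\lambda')$ sits in $\bigcup_i S_i$, where $S_i := B\cap\{x : \sca{x}{z_i}\text{ lies between }\lambda_i\text{ and }\lambda_i'\}$ is a slab of width $|\lambda_i-\lambda_i'|/\nr{z_i}$, hence of Lebesgue measure $O(|\lambda_i-\lambda_i'|)$ since $z_i\neq 0$. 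Combined with the boundedness of $\rho$ on $B$, this controls $|\Gloc(\lambda)-\Gloc(\lambda')|$ and yields continuity.

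For the derivative formula I fix $i$ and $\varepsilon>0$, set $n_i:=z_i/\nr{z_i}$, and orthogonally decompose $x=y+t\,n_i$ with $y\in z_i^\perp$. The slab $\Kloc(\lambda+\varepsilon e_i)\setminus \Kloc(\lambda)$ then parameterizes as $y\in F(s)$, $s\in(0,\varepsilon/\nr{z_i}]$, where $F(s)\subset z_i^\perp$ is a polytope cut out by the constraints with normal vectors $z_j^\perp := z_j-\sca{z_j}{n_i}n_i$ ($j\neq i$) and right-hand sides depending affinely on $s$; under the translation $y\mapsto y+(\lambda_i/\nr{z_i})n_i$, $F(0)$ identifies with the facet $\Kloc(\lambda)\cap\{\sca{\cdot}{z_i}=\lambda_i\}$. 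Fubini then gives
$$
\Gloc(\lambda+\varepsilon e_i)-\Gloc(\lambda) \;=\; \int_0^{\varepsilon/\nr{z_i}} \phi(s)\,\dd s, \qquad \phi(s) := \int_{F(s)}\rho\,\dd\Haus^{k-1},
$$
so the stated formula follows once $\phi$ is shown continuous at $0$.

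This is where pairwise independence enters. Since no $z_j$ is collinear with $z_i$, each $z_j^\perp$ is non-zero in $z_i^\perp$; moreover $\{z_j^\perp\}_{j\neq i}$ still positively spans $z_i^\perp$ (project any non-negative expansion of a vector in $z_i^\perp$: the $z_i$-term is annihilated while the sum survives). The continuity argument of the first step therefore applies one dimension down to the family $F(s)$, yielding continuity of $\phi$ at $0$ and hence $\partial\Gloc/\partial\lambda_i(\lambda)=\phi(0)/\nr{z_i}$ as claimed. Continuity of this partial derivative in $\lambda$ follows from the same slice-continuity estimate, completing the $\Class^1$ regularity. The main obstacle I anticipate is the bookkeeping around the Fubini step: setting up coordinates so that $F(0)$ matches the facet in the statement, and verifying that both the non-degeneracy and the positive-spanning properties transmit to $\{z_j^\perp\}_{j\neq i}$ inside $z_i^\perp$ so that the continuity estimate can be re-used in dimension $k-1$.
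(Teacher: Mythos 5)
Your proposal follows essentially the same route as the paper: compactness of $\Kloc(\lambda)$ from the conic-hull hypothesis, a slab estimate on the symmetric difference for (locally Lipschitz) continuity of $\Gloc$, a Fubini (equivalently co-area) decomposition reducing the derivative to continuity of the slice integral $\phi(s)$, and a reduction to the first item applied one dimension lower inside $z_i^\perp$ via the pairwise-independence assumption. Your explicit check that $\{z_j^\perp\}_{j\neq i}$ still positively spans $z_i^\perp$ is a welcome detail that the paper glosses over (it only verifies the projected normals are non-zero, implicitly relying on the facets inheriting compactness from $\Kloc(\lambda)$).

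The one place where the sketch is loose: continuity of $\phi$ at $0$ (and of the partial derivative in $\lambda$) does not follow from ``the continuity argument of the first step one dimension down'' alone. After the orthogonal decomposition, $\phi(s)=\int_{F(s)} \rho\bigl(y + (\lambda_i/\nr{z_i}+s)\,n_i\bigr)\,\dd\Haus^{k-1}(y)$ has two sources of $s$-dependence: the domain $F(s)$ and the argument at which $\rho$ is evaluated. The first item, re-used in $z_i^\perp$, controls only the domain variation for a \emph{fixed} continuous density; it says nothing about the drift of the integrand. You still need a triangle-inequality split and a uniform-continuity estimate on $\rho$ over the common compact set $B$ to control the integrand shift, which is exactly the two-term bound the paper performs in its Step 2. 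This is a minor fix, not a structural flaw, but it is a step your write-up does not account for.
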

\begin{proof}  Let $e_1,\hdots,e_N$ be the canonical basis of
  $\Rsp^N$.

 \noindent\textbf{Step 0.} Note that, because the conic hull of the
  $z_i$ equals $\Rsp^k$, the polytope $\Kloc(\lambda)$ is always
  compact. Moreover, one easily sees that if $\lambda \leq \lambda'$
  (coordinate-wise), one has $\Kloc(\lambda)\subseteq \Kloc(\lambda')$. This
  implies that
  \begin{equation} \forall R\geq 0,~~ \exists C_R\subseteq \Rsp^d \hbox{ compact s.t. }
    \forall \lambda' \in \Rsp^N\  \max_i |\lambda'_i - \lambda_i| \leq R \Rightarrow \Kloc(\lambda')
    \subseteq C_R. \label{eq:CR}
  \end{equation}
  We now sketch how to prove the continuity of the function $\Gloc$ near
  any $\lambda\in\Rsp^N$. Let $t\in [-R,R]$. We can assume that $t\geq 0$. First, note that the symmetric difference
  $\Kloc(\lambda) \Delta \Kloc(\lambda + t e_i)$ is contained in a slab, or more precisely
  $$ \Kloc(\lambda) \Delta \Kloc(\lambda + t e_i) \subseteq
  C_R\cap \{ x \in \Rsp^d \mid \sca{x}{z_i} \in [\lambda,\lambda+t]\}, $$
and that the width of the slab is $t/\nr{z_i}$. This gives
  \begin{align*}
    \abs{\Gloc(\lambda) - \Gloc(\lambda+te_i)}
    \leq \int_{\Kloc(\lambda) \Delta \Kloc(\lambda + t e_i)} \rho(x) \dd x 
    \leq \left[\frac{\diam(C_R)^{d-1}\max_{C_R} \abs{\rho}}{\nr{z_i}}\right] t
  \end{align*}
  A similar bound obviously exist for $t\leq 0$. Using this estimate on each coordinate axis, one obtains the
  continuity of $\Gloc$ (and in fact, this proof even shows that $\Gloc$ is
  locally Lipschitz). This proves the first statement.
  
\noindent\textbf{Step 1.} We now prove the second statement, and assume that
  $\rho$ is continuous and the $z_i$ are pairwise independent. Fix
  some index $i_0 \in \{1,\hdots,N\}$ and take $\lambda \in \Rsp^N$. We consider the convex set $L
  := \{ x \in \Rsp^k \mid \forall i \neq i_0\  \sca{x}{z_{i}} \leq \lambda_i\}$. For any $t\geq 0$, using the function $u: x \in \Rsp^k\mapsto \sca{x}{z_{i_0}} - \lambda_{i_0}$, one has $\Kloc(\lambda + t e_{i_0}) \setminus \Kloc(\lambda) = L \cap u^{-1}([0,t])$.  Applying the co-area formula with the function $u$ whose gradient is $\nabla u = z_{i_0}$, we can evaluate  the slope
\begin{align}
\frac{1}{t} (\Gloc(\lambda + t e_{i_0}) - \Gloc(\lambda))
&= \frac{1}{t}\int_{L \cap u^{-1}([0,t])} \rho(x)d\Haus^k(x)  \notag \\
&= \frac{1}{t}\int_{0}^{t} \int_{L \cap u^{-1}(s)} \frac{\rho(x)}{\nr{z_{i_0}}} \dd\Haus^{k-1}(x) \dd s \notag \\
&= \frac{1}{t}\int_{0}^{t} g_{i_0}(\lambda + s e_{i_0}) \dd s \label{eq:Glambda}
\end{align}
where we have set
$$ g_{i_0}(\overline{\lambda}) := \int_{\Kloc_{i_0}(\overline{\lambda})}
\frac{\rho(x)}{\nr{z_{i_0}}} \dd \Haus^{k-1}(x) \quad \hbox{ with }
\Kloc_{i_0}(\overline{\lambda}) = \{ x\in \Kloc(\overline{\lambda}) \mid \sca{x}{z_{i_0}} =
\overline{\lambda}_{i_0} \}$$ Note that by construction, $\Kloc_{i_0}(\overline{\lambda})$ is
the facet of $\Kloc(\overline{\lambda})$ with exterior normal
$z_{i_0}/\nr{z_{i_0}}$. Assume for now that we are able to prove that
the functions $g_{i_0}$ are continuous. Then, by the fundamental theorem
of \jocelyn{calculus} and by Equation~(\ref{eq:Glambda}) one has $\frac{\partial
  \Gloc}{\partial \lambda_{i_0}}(\lambda) = g_{i_0}(\lambda).$ Since we
have assumed that $g_{i_0}$ is continuous, this shows that the
function $\Gloc$ has continuous partial derivatives and is therefore
$\Class^1$, and gives the desired expression for its partial
derivatives.

\noindent\textbf{Step 2.} Our goal is now to establish the continuity of the
function $g_{i_0}$. In order to do that, we will parameterize the
facet $\Kloc_{i_0}(\lambda)$ using the hyperplane $V = \{z_{i_0}\}^\perp$
and $\Pi_V$ the orthogonal projection on this hyperplane. Then,
decomposing $x\in \Kloc_{i_0}(\lambda)$ as $\Pi_V(x) +
\lambda_{i_0}\frac{z_{i_0}}{\nr{z_{i_0}}^2}$ we get
$$ g_{i_0}(\lambda) = \frac{1} {\nr{z_{i_0}}} \int_{\Pi_V
  (\Kloc_{i_0}(\lambda))} \rho\left(y + \lambda_{i_0}
\frac{z_{i_0}}{\nr{z_{i_0}}^2}\right) \dd \Haus^{k-1}(y) $$
By compactness, $\rho$ is uniformly continuous on $C_R$,
where $C_R$ is defined in Eq.~(\ref{eq:CR}): there exists a function
$\omega_R: \Rsp^+\to\Rsp^+$ satisfying $\lim_{r\to 0} \omega_R(r) = 0$
and such that for all $x,y\in C_R,$ $\abs{\rho(x) - \rho(y)} \leq
\omega_R(\nr{x-y})$. Using the function $\rho_{\lambda}(y) := \rho(y + \lambda_{i_0} z_{i_0}/\nr{z_{i_0}}^2)$ and the notation $\tilde{\Kloc}_{i_0}(\lambda) = \Pi_V(\Kloc_{i_0}(\lambda))$, one has for every $\lambda'$
\begin{align}
  &\nr{z_{i_0}} \abs{g_{i_0}(\lambda) - g_{i_0}(\lambda')}  \notag \\ 
  & = \abs{\int_{\tilde{\Kloc}_{i_0}(\lambda)} \rho_\lambda(y)\dd \Haus^{k-1}(y) - \int_{\tilde{\Kloc}_{i_0}(\lambda')} \rho_{\lambda'}(y)\dd \Haus^{k-1}(y) }  \notag\\ 
  &\leq  \abs{\int_{\tilde{\Kloc}_{i_0}(\lambda)} (\rho_\lambda(y) - \rho_{\lambda'}(y)) \dd \Haus^{k-1}(y) }
  \notag\\  &
  \quad + \abs{\int_{\tilde{\Kloc}_{i_0}(\lambda)} \rho_{\lambda'}(y)\dd \Haus^{k-1}(y) 
  - \int_{\tilde{\Kloc}_{i_0}(\lambda')} \rho_{\lambda'}(y)\dd \Haus^{k-1}(y) } 
   \notag\\ 
   \label{ineq:g}   
   \end{align}
 Suppose now that $\max_i|\lambda_i - \lambda_i'| \leq R$. Then the first term of the right hand side term is bounded by $\Haus^{k-1}(\Pi_V(C_R)) \omega_R(\abs{\lambda_{i_0} - \lambda'_{i_0}}/\nr{z_{i_0}})$ which tends to zero when $\lambda'$ tends to $\lambda$. For the second term, we note that
\begin{align*}
  \tilde{\Kloc}_{i_0}(\lambda) &= \{ y \in V \mid
  \forall i\neq i_0,~\sca{y + \lambda_{i_0} z_{i_0}/\nr{z_{i_0}}^2}{z_i} \leq \lambda_i\} \\
  &= \{ y \in V \mid
  \forall i\neq i_0,~\sca{y}{\tilde{z}_i} \leq \lambda_i - \lambda_{i_0} \sca{z_i}{z_{i_0}}/\nr{z_{i_0}}^2\},
\end{align*}
where we have set $\tilde{z_i} = \Pi_{V}(z_i) = \Pi_{\{z_{i_0}\}^\perp}
(z_{i})$. The assumption that $z_i$ and $z_{i_0}$ are independent
implies that the vectors $\tilde{z}_{i}$ are non-zero. We conclude
using the first part of the Lemma that the function 
$$
\lambda \mapsto \int_{\tilde{\Kloc}_{i_0}(\lambda)} \rho_{\lambda'}(y)  \dd \Haus^{k-1}(y)
$$
is continuous. Using the inequality \eqref{ineq:g}, we see that
$\lim_{\lambda'\to\lambda} g_{i_0}(\lambda') = \lambda$. This shows
that $g_{i_0}$ is continuous and concludes the proof of the lemma.

\end{proof}



We will also use the following easy consequence of the genericity
hypothesis.

\begin{lemma}\label{lemma-genericity} Assume $\{ y_1, \ldots, y_N \} \subset \Rsp^d$ is in
generic position with respect to a $k$-dimensional simplex
$\sigma=[x_0,\hdots,x_{k}]$ and let $H = \vect(x_1 - x_0, \hdots, x_k
- x_0).$ Then,
\begin{itemize}
\item For every pairwise distinct $i,j,l \in\{1,\hdots,n\}$, the
  vectors $z_1 = \pi_H(y_j - y_i)$ and $z_2 = \pi_H(y_l - y_i),$ where
  $\pi_H$ is the orthogonal projection on $H$, are not collinear.
\item For every distinct $i,j\in \{1,\hdots,n\}$, the vector
  $\pi_H(y_j - y_i)$ is not perpendicular to any of the
  $(k-1)$-dimensional facets of $\sigma$.
\end{itemize}
\end{lemma}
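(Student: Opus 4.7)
The plan is to derive both items directly from the genericity identity (\ref{eq:Transverse}) by choosing $\ell$, $p$ and the index sets carefully, together with the following elementary linear-algebra identity: for any two subspaces $A,B$ of an inner product space,
$$ \dim A - \dim(A\cap B^\perp) \;=\; \dim B - \dim(B\cap A^\perp),$$
both sides being equal to the rank of the bilinear pairing $A\times B\to\Rsp,\ (a,b)\mapsto\sca{a}{b}$.

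For the first item, I would apply \eqref{eq:Transverse} with $\ell=2$, indices $i_0=i$, $i_1=j$, $i_2=l$, and with $p=k$, $j_0,\ldots,j_k=0,\ldots,k$ (the full simplex). This yields $\dim(V^\perp\cap H)=k-2$, where $V:=\vect(y_j-y_i,\,y_l-y_i)$. Plugging this into the rank identity with $A=H$ and $B=V$ gives $\dim V - \dim(V\cap H^\perp)=k-(k-2)=2$. Since $\dim V\leq 2$, this forces $\dim V=2$ (so $y_j-y_i$ and $y_l-y_i$ are already linearly independent in $\Rsp^d$) and $V\cap H^\perp=\{0\}$, so that $\pi_H$ is injective on $V$. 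Hence $\pi_H(V)=\vect(z_1,z_2)$ has dimension $2$, i.e.\ $z_1$ and $z_2$ are non-collinear.

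For the second item, let $F\subset H$ be the $(k-1)$-dimensional tangent subspace of a facet of $\sigma$, spanned by the $k-1$ edges $x_{j_m}-x_{j_0}$ ($1\le m\le k-1$) associated to the $k$ vertex indices $j_0,\ldots,j_{k-1}\in\{0,\ldots,k\}$ of that facet. Applying \eqref{eq:Transverse} with $\ell=1$, $i_0=i$, $i_1=j$ and $p=k-1$ gives $\dim(\{y_j-y_i\}^\perp\cap F)=k-2$, which is strictly less than $\dim F=k-1$. Therefore $F\not\subseteq\{y_j-y_i\}^\perp$, which means exactly that $y_j-y_i$ is not orthogonal to $F$. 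Because $F\subseteq H$, one has $\sca{y_j-y_i}{w}=\sca{\pi_H(y_j-y_i)}{w}$ for every $w\in F$, so non-orthogonality of $y_j-y_i$ to $F$ is equivalent to non-orthogonality of $\pi_H(y_j-y_i)$ to $F$, which is the claim.

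Neither step requires more than bookkeeping; the only real point is to invoke the symmetric rank identity to convert the genericity condition, which naturally controls $V^\perp\cap H$, into information about the image $\pi_H(V)$. The degenerate low-dimensional case $k=2$ is handled uniformly by the same formula (giving $\max(k-2,0)=0$), and the argument for item two never requires $z_1$ or $z_2$ to be non-zero a priori, since their non-collinearity is extracted from a dimension count rather than a direct calculation.
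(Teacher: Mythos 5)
Your proof is correct and relies on the same core applications of the genericity condition \eqref{eq:Transverse} as the paper's. For the first item, the paper also uses \eqref{eq:Transverse} with $\ell=2$ and $p=k$, but reasons by contradiction: it first shows $\{y_j-y_i\}^\perp\cap H=\{z_1\}^\perp\cap H$ (and similarly for $z_2$), then observes that if $z_1$ and $z_2$ were collinear, the intersection $\{y_j-y_i,y_l-y_i\}^\perp\cap H$ would have dimension $k-1$ instead of $k-2$. Your route is more direct: the rank identity for the pairing $H\times V\to\Rsp$ converts the controlled quantity $\dim(H\cap V^\perp)=k-2$ into $\dim V-\dim(V\cap H^\perp)=2$, which immediately forces $\dim\pi_H(V)=2$ without passing through a contradiction or establishing separately that $z_1,z_2\neq 0$. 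This is arguably cleaner, at the cost of introducing the (standard but not trivial) rank identity. For the second item, the paper simply declares it ``straightforward''; your one-line argument via \eqref{eq:Transverse} with $\ell=1$, $p=k-1$ and the observation that $F\subseteq H$ makes orthogonality to $y_j-y_i$ and to $\pi_H(y_j-y_i)$ equivalent is exactly what is needed.
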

\begin{proof}By the genericity condition of Definition \ref{def:Generic},
    $\jocelyn{\{y_j-y_i\}}^\perp \cap H$ is of dimension $k-1$. Furthermore, for a
    vector $u\in \jocelyn{\{y_j-y_i\}}^\perp \cap H$, one has  $\sca{u}{y_j-y_i}=0$ and
    $\sca{u}{\jocelyn{z_1}}=0$ which implies that $\jocelyn{\{y_j-y_i\}}^\perp \cap
    H=\{z_1\}^\perp \cap H$. Similarly, one has $\jocelyn{\{y_l-y_i\}}^\perp \cap
    H=\{z_2\}^\perp \cap H$. If $z_1$ and $z_2$ are collinear, then
    $\jocelyn{\{y_j-y_i,y_l-y_i\}}^\perp \cap H=(\jocelyn{\{y_j-y_i\}}^\perp
    \cap H) \cap (\jocelyn{\{y_l-y_i\}}^\perp \cap H)$ is of dimension $k-1$ which contradicts the genericity condition. The proof of the second item is straightforward.
\end{proof}

\begin{proof}[Proof of Theorem \ref{thm:regularity}.] Our goal is to show
that $G_{i,\sigma}$ (defined in \eqref{eq:GradPhi-sigma}) is
$\Class^1$--regular and to compute its partial derivatives.  From now
on, we fix some index $i_0 \in \{1,\hdots,N\}$. Reordering indices if
necessary, we assume that $i_0 = N$. We want to apply
Lemma~\ref{lemma:transversality}, and for that purpose we are first
going to rewrite $\Lag_i(\psi) \cap \sigma$ under the form
\eqref{eq:K-loc}. Denote $H$ the $k$-dimensional affine space spanned
by $\sigma$; translating everything if necessary, we can assume that
$H$ is a linear subspace of $\Rsp^d$. A simple calculation shows that
the intersection of the $N$th power cell with $H$ is given by
\begin{equation*}
\Pow_N(\psi) \cap H = \{ x \in H
\mid \forall i \in \{1, \ldots, N-1\}, \sca{x}{z_i} \leq \lambda_i \},
\end{equation*}
where $\lambda_i = \frac{1}{2} (\nr{y_i}^2 + \psi_i - (\nr{y_N}^2 +\psi_N))$
and $z_i$ is the orthogonal projection of $y_i - y_{N}$ on
$H$.  Since $\sigma$ is a $k$-dimensional simplex, it can be written
as the intersection of $k+1$ half-spaces of $H$, i.e. 
$\sigma = \{ x \in H \mid \forall j \in \{N, \ldots, N+k\},
\sca{x}{z_j} \leq 1 \}$ for some non-zero vectors $z_i$ of
$H$. Combining these two expressions, one
gets 
$$ 
\Lag_\jocelyn{N}(\psi) \cap \sigma = \{ x \in H \mid \forall i \in \{1, \ldots,
N+k \},~ \sca{x}{z_i} \leq \lambda_\jocelyn{i} \}. 
$$ 
\jocelyn{where $ \lambda_i = 1 $ for $ i \in \{N, \ldots, N+k\} $.}



We will now show that the assumptions of Lemma
\ref{lemma:transversality} are satisfied. Since $\sigma$ is a
nondegenerate simplex, $z_i\neq 0$ for every \jocelyn{$i\geq N$} and the
vectors $z_i,z_j$ for $i\neq j$ and $i,j\geq N$ are pairwise
independent. From the first genericity property of Lemma
\ref{lemma-genericity}, we know that $z_i = \Pi_H(y_i - y_N)$ and $z_j
= \Pi_H(y_j - y_N)$ are independent ($i\neq j$ and $i,j < N$). From
the second genericity condition, we also know that $z_i,z_j$ are
independent when $i\neq j$ and $i<N$ and $j\geq N$. In order to apply
Lemma~\ref{lemma:transversality} we need to extend the continuous
density $\rho_\sigma: \sigma\subseteq H\to\Rsp$ into a continuous
density $\rho: H\to\Rsp$. Since $\sigma$ is convex, this can be easily
done using the projection map $\Pi_\sigma:H\to\sigma$, and by setting
$\rho(x) = \rho_\sigma(\Pi_\sigma(x))$. Then, $\rho$ is continuous as
the composition of two continuous maps (recall that since $ \sigma $ is convex, the projection
$\Pi_\sigma$ is $1$-Lipschitz). With these constructions one has
$$ G_{\sigma,N}(\psi) = \Gloc(A(\psi)),$$
where $A:\Rsp^{N} \to\Rsp^{N+k}$ is the affine map
$$ A(\psi) := \left(\frac{1}{2} (\nr{y_1}^2 + \psi_1 - (\nr{y_N}^2
+\psi_N)), \hdots, \frac{1}{2} (\nr{y_{N-1}}^2 + \psi_{N-1} -
(\nr{y_{N}}^2 +\psi_N), 1, \hdots,1\right)$$ with $k+1$ trailing ones. By
Lemma~\ref{lemma:transversality}, $\Gloc$ has class $\Class^1$, and the
expression above shows that $G_{\sigma,N}$ is also
$\Class^1$. Moreover, denoting $A = (A_1,\hdots,A_{N+k}),$ one gets
\begin{align*}
 \forall i\neq N,~~ \frac{\partial G_{\sigma,N}}{\partial \psi_i} (\psi) 
&= \sum_{1\leq j\leq N+k} \frac{\partial A_j}{\partial{\psi_i}}(\psi) \frac{\partial \Gloc}{\partial \lambda_j}(A(\psi))\\
&= \frac{1}{2} \frac{\partial \Gloc}{\partial \lambda_i}(A(\psi)) \\
&= \frac{1}{2\nr{z_i}} \int_{\Kloc(A(\psi))\cap\{x\in H\mid \sca{x}{z_i} = \lambda_i\}} \rho_\sigma(x) \dd\Haus^{k-1}(x) \\
&= \frac{1}{2\nr{z_i}} \int_{\Lag_{i,N}(\psi) \cap \sigma} \rho_\sigma(x) \dd\Haus^{k-1}(x),
\end{align*}
thus establishing the first formula in \eqref{eq:HessPhi}. The second
formula in this equation deals with the case $i=N$, and follows from a
similar computation and from the expression
$$\left(\frac{\partial A_j}{\partial \psi_N}(\psi)\right)_{1\leq j\leq
  N+k} = \left(-\frac{1}{2},\hdots,-\frac{1}{2}, 0,\hdots,
0\right), $$ with $k+1$ trailing zeros. We have therefore established
the theorem when $\mu = \mu_{\sigma}$. The case where $\mu =
\sum_{\sigma \in \Sigma} \mu_{\sigma}$ is a simplicial measure follows
by linearity.
\end{proof}

\section{Strict monotonicity of $G$}\label{sec:motonicity}
As mentioned in Section \ref{sec:transport-plan}, the second ingredient needed for the proof
of the convergence of the damped Newton's algorithm is a motonicity property of $ G $.
This property relies heavily on the ``strong
connectedness'' of the support of $\mu$ assumed in the third item of
Def.~\ref{def:reg-simplicial-measure}.
We denote by $\cstperp  = \{ v\in\Rsp^Y \mid \sum_{1\leq i\leq N} v_i = 0\}$ the orthogonal of the constant functions on $Y$. 

\begin{theorem} \label{thm:strict-concavity}
  Let $\mu$ be a regular simplicial measure and assume that
  $y_1,\hdots,y_N$ is generic with respect to the support of $\mu$
  (Def.~\ref{def:Generic}). Then $G$ is strictly monotone
  in the sense
  that
    $$ \forall \psi \in K^{+},~ \forall v \in \cstperp\setminus\{0\},~ \sca{\D G(\psi) v}{v} < 0. $$
\end{theorem}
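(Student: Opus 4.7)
The plan is to reduce the strict monotonicity to the connectedness of a ``Laguerre adjacency graph'' and then invoke the strong connectedness hypothesis on the support $K$.

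First, by Theorem~\ref{thm:regularity}, the off-diagonal entries $a_{ij}(\psi):=\partial G_i/\partial\psi_j(\psi)$ are non-negative and symmetric ($a_{ij}=a_{ji}$), while the diagonal entries satisfy $\partial G_i/\partial\psi_i(\psi)=-\sum_{j\neq i}a_{ij}(\psi)$. Hence $\D G(\psi)$ is the negative of a weighted graph Laplacian, and a standard symmetrization yields, for every $v\in\Rsp^N$,
\[
\sca{\D G(\psi) v}{v}=-\sum_{1\leq i<j\leq N}a_{ij}(\psi)\,(v_i-v_j)^2\leq 0,
\]
with equality if and only if $v_i=v_j$ whenever $a_{ij}(\psi)>0$.

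Next I would introduce the undirected graph $\Gamma(\psi)$ on vertex set $\{1,\hdots,N\}$, with an edge $\{i,j\}$ exactly when $a_{ij}(\psi)>0$. The equality-case analysis above shows that if $\Gamma(\psi)$ is connected, then $\sca{\D G(\psi)v}{v}=0$ forces $v$ to be constant on $\{1,\hdots,N\}$; combined with $v\in\cstperp$ (zero sum) this yields $v=0$. The theorem therefore reduces to the claim that $\Gamma(\psi)$ is connected whenever $\psi\in\K^+$.

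I would establish this claim by contradiction. Suppose $\Gamma(\psi)$ splits as $I\sqcup J$, both non-empty, with no edge crossing, and set $K_I:=\bigcup_{i\in I}\Lag_i(\psi)$, $K_J:=\bigcup_{j\in J}\Lag_j(\psi)$. Both are non-empty closed sets whose union is $K$, and both carry positive $\mu$-mass since $\psi\in\K^+$. The assumption $a_{ij}=0$ combined with the strict positivity of each $\rho_\sigma$ forces the convex set $\Lag_{i,j}(\psi)\cap\sigma$ to have vanishing $(d_\sigma-1)$-Hausdorff measure, hence dimension at most $d_\sigma-2$. Because $d_\sigma\geq 2$, removing a set of codimension $\geq 2$ from the convex simplex $\sigma$ leaves it connected; applied to the closed cover $\sigma=(K_I\cap\sigma)\cup(K_J\cap\sigma)$, this forces each simplex $\sigma\in\Sigma$ to lie entirely in $K_I$ or entirely in $K_J$. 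One thereby obtains a partition $\Sigma=\Sigma_I\sqcup\Sigma_J$ with both parts non-empty.

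The main obstacle is the final step: converting this simplicial partition into a violation of the strong connectedness of $K$. I would first observe that two simplices, one in $\Sigma_I$ and one in $\Sigma_J$, cannot share a codimension-one face of the higher-dimensional of the two, since such a face would carry positive $(d_\sigma-1)$-Hausdorff mass inside some $\Lag_{i,j}(\psi)\cap\sigma$, contradicting $a_{ij}=0$. Consequently the shared skeleton $(\bigcup\Sigma_I)\cap(\bigcup\Sigma_J)$ sits in codimension $\geq 2$ of each adjacent simplex, and a careful topological argument --- exploiting the genericity of $Y$ to control how Power cells touch these shared sub-faces --- should exhibit a \emph{finite} set $S\subseteq K$ whose removal separates $K$ into $(\bigcup\Sigma_I)\setminus S$ and $(\bigcup\Sigma_J)\setminus S$, contradicting the assumption that $K\setminus S$ is connected for every finite $S\subseteq K$. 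This topological reduction is the delicate step and is the reason for the rather strong connectedness requirement in Definition~\ref{def:reg-simplicial-measure}.
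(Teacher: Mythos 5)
Your overall strategy matches the paper's: reduce strict monotonicity of $\D G(\psi)$ to connectedness of the adjacency graph $\Gamma(\psi)$ (the paper calls it $\mathcal{G}$), then derive a contradiction with the strong connectedness hypothesis from any nontrivial partition of the vertex set. Your linear-algebra reduction via the graph-Laplacian quadratic form $\sca{\D G(\psi)v}{v}=-\sum_{i<j}a_{ij}(v_i-v_j)^2$ is correct and in fact slightly cleaner than the paper's Gershgorin-plus-maximum-principle argument; the two are equivalent.

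However, there is a genuine gap in your final step, and it is exactly where the genericity of $Y$ is doing the real work. From $a_{ij}(\psi)=0$ and convexity you correctly deduce $\dim\bigl(\Lag_{i,j}(\psi)\cap\sigma\bigr)\le d_\sigma-2$, and you argue that a codimension-$\ge 2$ subset cannot disconnect a simplex. That is fine for forcing each $\sigma$ (minus a small set) into one side or the other. But the hypothesis in Definition~\ref{def:reg-simplicial-measure} is only that $K\setminus S$ stays connected for \emph{finite} $S$, and a codimension-$2$ subset of a $d_\sigma$-dimensional simplex is finite only when $d_\sigma=2$. For $d_\sigma>2$ your separating skeleton could be a whole $(d_\sigma-2)$-dimensional polytope, which the connectedness hypothesis says nothing about. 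You acknowledge this ("a careful topological argument \dots should exhibit a finite set $S$") but do not supply it. The paper closes this hole with a sharper dichotomy, proved from genericity via Lemmas~\ref{lemma:ri} and~\ref{lemma:ri-generic}: if $\Lag_{i,j}(\psi)$ meets a $k$-simplex $\sigma$ at all, the intersection is \emph{either a single point or has dimension exactly $k-1$} — every intermediate dimension $1,\dots,k-2$ is excluded. Combined with $a_{ij}=0$ (which rules out dimension $k-1$), the only possibility left is a singleton, and the union of all such singletons over the finitely many triples $(i,j,\sigma)$ is a finite set $S$. That is what makes the strong-connectedness hypothesis applicable. Without this dichotomy, a codimension-$2$ bound alone does not produce a finite separating set, so the contradiction does not go through for simplices of dimension $>2$.
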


\begin{figure}[h]
    \centering
    \includegraphics[scale=0.33]{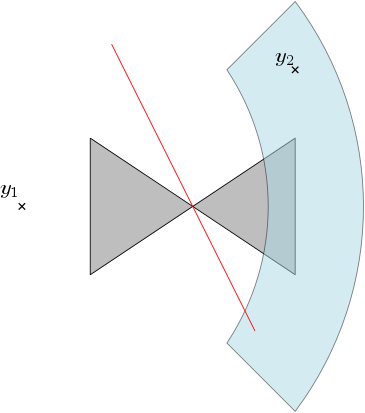}
    \caption{Simplex soup where the set of points $ y_1, y_2 $
        such that $ \mu(\Lag_{1, 2}(\psi)) = 0 $ has not a zero measure.}
    \label{fig:simplicial-complex-problem}
\end{figure}

\begin{remark}
Let us illustrate the fact that the connectedness of $K$ is not sufficient (i.e. why
we require that it is impossible to disconnect the support $K$ of
$\mu$ by removing a finite number of points). Consider the case where $
K $ is made of the two 2-dimensional simplices embedded in $\Rsp^2$,
and displayed in grey in Figure
\ref{fig:simplicial-complex-problem}. We assume that $\mu$ is the
restriction of the Lebesgue measure to $K$ and that $Y = \{y_1,y_2\}$.
Then, the matrix of  the differential of $G$ at $\psi$ is the $2$-by-$2$ matrix
given by
$$ \D \jocelyn{G}(\psi) = \begin{pmatrix} a & -a \\ -a &
  a \end{pmatrix} \hbox{ where } a = \frac{1}{2\nr{y_1-y_2}} \Haus^1(\Lag_{1,2}(\psi) \cap
K). $$ If we fix $ y_1\in\Rsp^2 $, it is easy to see that for any $ y_2 $ in
the blue domain, there exists weights $ \psi_1 $ and $ \psi_2 $ such
that the interface $\Lag_{1, 2}(\psi)$ (in red) passes through the
common vertex between the two simplices, thus implying that
$a= 0$, hence $\D G(\psi)=0$. In such setting, $G$ is not
strictly monotone, the conclusion of Theorem~\ref{thm:strict-concavity}
does not hold.
\end{remark}

The end of this section is devoted to the proof of Theorem~\ref{thm:strict-concavity}. 

\subsection{Preliminary lemmas} \label{subsec:technical-lemmas}

With a slight abuse, we call tangent space to a convex set $K$ the
linear space $\vect(K - x)$ for some $x$ in $K$ (this space is
independent of the choice of $x$). We denote $\relint(K)$
the \emph{relative interior} of a convex set $K \subseteq \Rsp^d$ and
we call dimension of $K$ the dimension of the affine space spanned by
$K$.


\begin{lemma} \label{lemma:ri} 
Let $e,f$ be convex sets and $E$ and $F$ their tangent spaces. Assume
that $\relint(f) \cap \relint(e)\neq \emptyset$. Then,
$$ \dim(e\cap f) = \dim(E\cap F).$$
\end{lemma}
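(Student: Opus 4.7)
The plan is to reduce to the case where the common relative interior point is the origin, so that tangent spaces coincide with linear spans, and then prove the two inequalities $\dim(e\cap f) \le \dim(E\cap F)$ and $\dim(e\cap f) \ge \dim(E\cap F)$ separately.

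First, pick $x_0 \in \relint(e)\cap\relint(f)$ and translate so that $x_0=0$. After this translation one has $e\subseteq E$ and $f\subseteq F$ (since a convex set containing the origin is contained in its linear span), hence $e\cap f\subseteq E\cap F$. The dimension of $e\cap f$ is at most the dimension of any linear space containing it, which gives the upper bound $\dim(e\cap f)\le \dim(E\cap F)$ immediately.

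For the lower bound, I would use the standard characterization of the relative interior: since $0\in\relint(e)$, there is some $\eps_e>0$ such that $B(0,\eps_e)\cap E\subseteq e$, and similarly there is $\eps_f>0$ with $B(0,\eps_f)\cap F\subseteq f$. Setting $\eps=\min(\eps_e,\eps_f)$, the set $B(0,\eps)\cap(E\cap F)$ is contained in both $e$ and $f$, hence in $e\cap f$. This set is a neighborhood of $0$ inside $E\cap F$, so the affine hull of $e\cap f$ contains $E\cap F$, giving $\dim(e\cap f)\ge \dim(E\cap F)$. Combining the two bounds proves the lemma.

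I do not expect any serious obstacle here; the only subtlety is the appeal to the tangent-space characterization of the relative interior, which is a classical fact about convex sets (e.g. \cite[Section 6]{rockafellar1970convex}). The translation step is cosmetic but does need to be stated carefully, because before translating, $E$ and $F$ are linear subspaces whereas $e$ and $f$ live in parallel affine subspaces; after translation by $-x_0$ these affine subspaces become $E$ and $F$ themselves, at which point the elementary Euclidean-ball argument applies directly.
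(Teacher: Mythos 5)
Your proof is correct and follows essentially the same route as the paper's: both establish the trivial inclusion/inequality from $e\cap f$ sitting inside $E\cap F$, and both use the defining property of the relative interior (small perturbations of $x_0$ in tangent directions remain inside each set) to get the reverse direction. The only cosmetic difference is that you normalize by translating $x_0$ to the origin and argue with Euclidean balls, whereas the paper works directly with the tangent space $G$ of $e\cap f$ and shows $G = E\cap F$ by two inclusions.
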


\begin{proof}
Let $G$ be the tangent space to $e\cap f$, so that $\dim(e\cap f)
= \dim(G)$. It suffices to show that $G = E\cap F$ to prove that
$\dim(e\cap f) = \dim(E\cap F)$. The inclusion $G\subseteq E\cap F$
holds without hypothesis (a tangent vector to $e\cap f$ is always both
a tangent vector to $e$ and to $f$). For the reciprocal inclusion,
consider $x\in \relint(e)\cap\relint(f)$ and $v\in E\cap F$. Then, by
definition of the relative interior, for $t$ small enough one has
$x+tv \in e$ and $x+tv\in f$, i.e. $x+tv\in e\cap f$, so that $tv$
belongs to $G$. This shows $G\subseteq E\cap F$ and concludes the
proof.
\end{proof}

\begin{lemma} \label{lemma:ri-generic}
Let $f\subseteq f'$ and $e$ be three convex sets of $\Rsp^d$, and
    $F \subseteq F'$ and $E$ be their tangent spaces. Assume that
    \begin{itemize}
\item $\relint(f) \cap \relint(e)\neq \emptyset$ ;
\item  $\dim(F') = \dim(F)+1$ and $\dim(E\cap F') = \dim(E\cap F)+1$.
\end{itemize}
Then $\dim(e\cap f') = \dim(e\cap f) + 1$.
\end{lemma}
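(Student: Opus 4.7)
The plan is to apply Lemma~\ref{lemma:ri} twice. The first application, to the pair $(e,f)$, is immediate from the hypothesis $\relint(f)\cap\relint(e)\neq\emptyset$ and gives $\dim(e\cap f)=\dim(E\cap F)$. If I can also apply it to the pair $(e,f')$, I will obtain $\dim(e\cap f')=\dim(E\cap F')$, which combined with the assumption $\dim(E\cap F')=\dim(E\cap F)+1$ yields the claimed equality. The only missing ingredient is therefore $\relint(f')\cap\relint(e)\neq\emptyset$, and producing a point in this intersection is where the two dimension hypotheses must be used.

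To this end, I fix $x_0\in\relint(f)\cap\relint(e)$ and consider the tangent cone $T$ to $f'$ at $x_0$, which is a convex cone in $F'$. The goal is to exhibit a vector $u\in E\cap F'$ such that $x_0+tu\in\relint(f')$ for all sufficiently small $t>0$; since $u\in E$ and $x_0\in\relint(e)$, such a point will automatically lie in $\relint(e)$ for small $t$. Because $x_0\in\relint(f)$ and $f\subseteq f'$, one has $F\subseteq T$, and moreover $T\neq F$: otherwise $f'$ would be locally contained in the affine subspace $x_0+F$, forcing by convexity $F'\subseteq F$ and contradicting $\dim F'=\dim F+1$. Since $F$ has codimension one in $F'$ and $T$ is a convex cone with $F\subsetneq T\subseteq F'$, the cone $T$ is either all of $F'$ or a closed half-space of $F'$ bounded by $F$; in either case, every $v\in F'\setminus F$ satisfies $v\in\relint(T)$ or $-v\in\relint(T)$.

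The hypothesis $\dim(E\cap F')=\dim(E\cap F)+1$ now ensures that $(E\cap F')\setminus F$ is nonempty, so I can pick $u$ in this set and, up to replacing $u$ by $-u$, assume $u\in\relint(T)$. The standard fact that directions in the relative interior of the tangent cone at $x_0$ correspond to points of $\relint(f')$ nearby then yields $x_0+tu\in\relint(f')$ for small $t>0$, hence $x_0+tu\in\relint(e)\cap\relint(f')$. Applying Lemma~\ref{lemma:ri} to $(e,f')$ then completes the proof. I expect the main obstacle to be this last geometric step linking the relative interior of the tangent cone to $\relint(f')$: it is standard convex analysis but requires some care with the definition $T=\overline{\bigcup_{\lambda>0}\lambda(f'-x_0)}$ and with identifying the relative interior of $T$ inside $F'$.
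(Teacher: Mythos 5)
Your proposal is correct and reaches the same reduction as the paper (apply Lemma~\ref{lemma:ri} to $(e,f)$ and $(e,f')$, so that everything hinges on showing $\relint(e)\cap\relint(f')\neq\emptyset$), but you carry out that last step by a genuinely different route. The paper works concretely: it picks a basis $e_1,\hdots,e_n$ of $F$ and $e_{n+1}\in(E\cap F')\setminus(E\cap F)$, builds a small cross-polytope $\Delta\subseteq f$ around $x_0$ in the directions $\pm\eps e_i$, takes a point $y\in f'$ with $v=y-x_0\in F'\setminus F$, forms the pyramid $\Delta_+=\conv(\Delta\cup\{x_0+v\})\subseteq f'$ of full dimension $\dim F'$, and observes that for the right sign the point $x_0+te_{n+1}$ lies in $\relint(\Delta_+)\subseteq\relint(f')$ while the segment $(x_0,x_0+te_{n+1}]$ meets $\relint(e)$. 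You instead argue structurally via the tangent cone $T$ of $f'$ at $x_0$: you show $F\subsetneq T\subseteq F'$, use the codimension-one hypothesis to classify $T$ as $F'$ or a closed half-space bounded by $F$, choose $u\in(E\cap F')\setminus F$ and flip its sign so that $u\in\relint(T)$, and then invoke the fact that a direction in $\relint(T)$ gives $x_0+tu\in\relint(f')$ for small $t>0$. All the steps in your reduction check out (in particular $T=F\Rightarrow f'\subseteq x_0+F$ since $f'-x_0\subseteq T$ for convex $f'$, and $\dim(E\cap F')>\dim(E\cap F)$ does force $(E\cap F')\setminus F\neq\emptyset$). The one place you lean on an unproved ``standard fact'' — that $u\in\relint(T)$ implies $x_0+tu\in\relint(f')$ for small $t>0$ — is exactly the geometric content that the paper's explicit $\Delta_+$ construction supplies by hand; if you want a self-contained proof you can recover it by the same kind of simplex argument: choose finitely many points of $\Rsp_{\geq 0}(f'-x_0)$ whose convex hull is a full-dimensional simplex in $F'$ with $u$ in its relative interior, rescale by the largest coefficient so that all vertices land in $f'-x_0$ (this uses $0\in f'-x_0$ and convexity), and conclude $tu\in\relint(f'-x_0)$ for $t$ small. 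So your approach is valid and arguably cleaner conceptually, at the cost of importing a piece of convex-analysis folklore that the paper instead verifies directly.
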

\begin{proof}
Let us first show that
$\relint(e)\cap \relint(f') \neq \emptyset$
.We consider
a basis $e_1,\hdots,e_n$ of $F$ and a vector $e_{n+1} \in E \cap F'$
such that $E \cap F' = (E \cap F) \oplus \Rsp e_{n+1}$ and $F' =
F \oplus \Rsp e_{n+1}$. Let $x_0$ be a point in the intersection
$\relint(f) \cap \relint(e)$, which we assumed non-empty. There exists
$\eps>0$ such that $\Delta := \conv(\{ x_0 \pm \epsilon e_i \mid 1\leq i\leq n \}) \subseteq f.$ Using the assumption that $F'$ is the
tangent space to $f'$, we know that there exists a point $y \in \jocelyn{f'}$
such that $v = y - x_0 \in F'\setminus F$. Consider the convex sets
$\Delta_\pm$ spanned by $\Delta$ and one of the points $x_0 \pm v$,
$\Delta_\pm = \conv(\Delta \cup \{ x_0 \pm v\})$
. The convex set
$\Delta_+ \cup \Delta_-$ is a neighborhood of $x_0$, meaning that
there exists $t\neq 0$ such that $x_\pm := x_0 \pm
te_{n+1} \in \relint(\Delta_\pm)$. Assume for instance
$x_+ \in \relint(\Delta_+) \subseteq f'$. Since $\Delta_+$ has the
same dimension as $f'$, one has
$x_+ \in \relint(\Delta_{+}) \subseteq \relint(f')$ and by a standard
property of the relative interior one has $(x_0, x_+] = (x_0,
x_0+te_{n+1}] \subseteq \relint(f')$. Finally, since $x_0$ belongs to
the relative interior of $e$ and $e_{n+1} \in E$, the segment $(x_0,
x_0+te_{n+1}]$ must intersect the relative interior of $e$, proving
that $\relint(e)\cap \relint(f') \neq \emptyset$.

Then using Lemma~\ref{lemma:ri}, we have $\dim(e\cap f) = \dim(E\cap F)$ and 
$\dim(e\cap f') = \dim(E \cap F') = \dim(e\cap f) + 1$.
\end{proof}

\subsection{Proof of the strict motonicity of $ G $} \label{subsec:proof-monotonicity}
This theorem will follow using standard arguments, once one has
established the connectedness of the graph induced by the \jocelyn{Jacobian}
matrix. Let $\psi\in K^+$, $H := \D G(\psi)$ and consider the
graph $ \jocelyn{\mathcal{G}} $ supported on the set of vertices $ V = \{1, \ldots, N\}$
and with edges
$$
E(\jocelyn{\mathcal{G}}) := \{ (i,j) \in V^2 \mid  i\neq j \hbox{ and }H_{i, j}(\psi) > 0  \}.
$$

\begin{lemma}
If $\Lag_{i,j}(\psi)$ intersects some $k$-dimensional simplex $\sigma\in\Sigma$, then
the intersection is either a singleton or has dimension $k-1$.
\end{lemma}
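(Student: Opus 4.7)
The plan is to work inside the $(k-1)$-dimensional affine subspace $A := B_{ij} \cap \mathrm{aff}(\sigma)$, where $B_{ij} = \{x \in \Rsp^d : \|x-y_i\|^2 + \psi_i = \|x - y_j\|^2 + \psi_j\}$ is the weighted bisector hyperplane of $y_i, y_j$. Since $\Lag_{i,j}(\psi) \cap \sigma = \Pow_i(\psi) \cap \Pow_j(\psi) \cap \sigma$ lies inside $A$, and applying the genericity condition \eqref{eq:Transverse} with $\ell = 1$, $p = k$ shows that the tangent space $\{y_j - y_i\}^\perp \cap F$ of $A$ has dimension exactly $k-1$, the intersection automatically has dimension at most $k-1$.

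To rule out intermediate dimensions $1,\ldots,k-2$, I would analyze the tangent cone of $C := \Lag_{i,j}(\psi) \cap \sigma$ at an arbitrary point $x_0 \in C$. Let $S \subseteq \{1,\ldots,N\}\setminus\{i\}$ denote the set of indices $a$ such that $x_0 \in B_{ia}$ (always $j \in S$), set $\ell = |S|$, and let $f'$ be the face of $\sigma$ containing $x_0$ in its relative interior, with $p = \dim f'$ and tangent space $F'$. Then the tangent cone factors as
\[
T_{x_0} C = \bigl\{v \in \{y_j - y_i\}^\perp : \sca{v}{y_a - y_i} \le 0 \text{ for } a \in S\setminus\{j\}\bigr\} \cap T_{x_0}\sigma,
\]
where $T_{x_0}\sigma$ is the polyhedral cone with lineality $F'$ and outward rays toward the vertices of $\sigma$ not in $f'$. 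The main claim to establish is then: under genericity, $\dim T_{x_0}C \in \{0, k-1\}$ for every $x_0 \in C$, which suffices since $\dim T_{x_0}C = \dim C$.

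The main technical point is to combine the dimension formula from \eqref{eq:Transverse}, namely $\dim(\{y_a - y_i\}_{a \in S}^\perp \cap F') = \max(p - \ell, 0)$, with Lemma~\ref{lemma-genericity} to check that all the bounding hyperplanes of $C$ inside $A$ (the $B_{ia}$ for $a \neq i,j$ and the facet hyperplanes of $\sigma$) are in ``general position'' within $A$, in the sense that any $m$ of them intersect in a subspace of codimension exactly $m$. The main obstacle is that $\max(p-\ell,0)$ computes only the lineality of $T_{x_0}C$, and can a priori take intermediate values; one must carefully combine this lineality with the conical part of the tangent cone (the outward $\sigma$-directions and the open Power half-spaces) to argue that whenever $T_{x_0}C$ contains any non-zero vector, the ``general position'' provided by \eqref{eq:Transverse} forces its total dimension to be the full $k-1$, so the only possibility left is $\dim T_{x_0}C = 0$, i.e.\ $C$ reduces to the singleton $\{x_0\}$.
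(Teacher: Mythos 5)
Your reduction to the tangent cone at a point $x_0\in C := \Lag_{i,j}(\psi)\cap\sigma$, and the observation that $C\subseteq B_{ij}\cap\mathrm{aff}(\sigma)$ forces $\dim C\leq k-1$, are correct. But the decisive step — ruling out the intermediate dimensions $1,\ldots,k-2$ — is never actually carried out. You state it as a claim to establish, identify the lineality $\{y_a-y_i\}_{a\in S}^\perp\cap F'$ of dimension $\max(p-\ell,0)$, and then explicitly describe the remaining argument as an "obstacle" that "one must carefully combine the lineality with the conical part" to resolve. That is the whole content of the lemma, and as written it is asserted rather than proved. In particular you give no mechanism for converting the genericity hypothesis into a statement about the full dimension of the cone (as opposed to its lineality), and you invoke only Lemma~\ref{lemma-genericity}, which concerns non-collinearity of projected vectors and is not the tool the argument actually needs.

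The paper closes this gap with a different device: it picks the \emph{minimal} face $f=[x_{j_0},\hdots,x_{j_p}]$ of $\sigma$ and the minimal Laguerre face $\Lag_{i_0,\hdots,i_\ell}(\psi)$ (with $i_0=i$, $i_1=j$) whose intersection still has dimension $m=\dim C$. Minimality guarantees their relative interiors meet, so Lemma~\ref{lemma:ri} plus the genericity condition \eqref{eq:Transverse} gives $m=p-\ell$ exactly (no $\max$, no cone/lineality distinction). Then Lemma~\ref{lemma:ri-generic} is the key dimension-increasing step your sketch is missing: if $p<k$, adding a vertex $x_{j_{p+1}}$ to $f$ produces $f'$ with $\dim(\Lag_{i_0,\hdots,i_\ell}(\psi)\cap f')=p+1-\ell>m$, contradicting the definition of $m$; similarly one shows $\ell=1$. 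This forces $p=k$, $\ell=1$, hence $m=k-1$. If you want to keep the tangent-cone language, you would still need an analogue of Lemma~\ref{lemma:ri-generic} to show that whenever the lineality or the cone has dimension strictly between $0$ and $k-1$, some face or half-space constraint can be relaxed to strictly increase the dimension, contradicting maximality — but none of that appears in your proposal.
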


\begin{proof}
Denote $\sigma = [x_0,\hdots,x_k]$
and assume that $m = \dim(\Lag_{i,j}(\psi) \cap \sigma) \geq 1$.
Consider a $p$-dimensional facet $f=[x_{j_0},\hdots,x_{j_p}]$ of
$\sigma$ and a facet
$\Lag_{i_0,\hdots,i_\ell}(\psi) \jocelyn{= \bigcap_{k=0}^\ell
    \Lag_{i_k}(\psi)} $ of $\Lag_{i,j}(\psi)$ (we take $i_0=i$ and $i_1=j$) 
such that $\dim(\Lag_{i_0,\hdots,i_\ell}(\psi) \cap f) = m$ and assume
that both facets are minimal for the inclusion. It is easy to see that
this minimality property implies that the relative interiors of $f$
and $\Lag_{i_\jocelyn{0},\hdots,i_\ell}(\psi)$ must intersect each other. With Lemma~\ref{lemma:ri}, this
ensures that
\begin{align} \label{eq:mdim}
  m &= \dim(\Lag_{i_0,\hdots,i_\ell}(\psi)\cap f) \\
  &= \dim(\jocelyn{\{y_{i_1} - y_{i_0},\hdots,y_{i_\ell} - y_{i_0}\}}^\perp \cap
\vect(\jocelyn{x_{j_1} - x_{j_0},\hdots,x_{j_p} - x_{j_0}})) = p-\ell,
\end{align}
where we used the genericity property (Def~\ref{def:Generic}) to get
the last equality.  We now prove that $p=k$ and $\ell=1$ by
contraction. If we assume that $p<k$, there exists $j_{p+1} \in
\{1,\hdots,k\}$ distinct from $\{{j_0},\hdots,{j_p}\}$. Set $e =
\Lag_{i_0\hdots,i_\ell}(\psi)$, $f = [x_{j_0},\hdots,x_{j_{p}}]$ and
$f' = [x_{j_0},\hdots,x_{j_{p+1}}]$. The genericity hypothesis allows
us to apply Lemma~\ref{lemma:ri-generic}. The conclusion of the lemma
is that $\dim(\Lag_{i_\jocelyn{0},\hdots,i_\ell}(\psi)\cap f') = p+1-\ell> m$,
which violates the definition of $m$. By contradiction one must have
$p=k$. With the same arguments (removing a point \jocelyn{$y_{i_n}$ for some $ n
    \in \{0, \ldots, \ell\} $ different from $ y_i $ and $ y_j$} from the
list if $i_\ell \geq 1$) we can see that necessarily $\ell= 1$. With
\eqref{eq:mdim} we get $m=k-1$, thus concluding the proof of the
lemma.
\end{proof}

\begin{lemma} The graph $\jocelyn{\mathcal{G}} $ is connected.
\end{lemma}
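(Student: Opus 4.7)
The plan is to argue by contradiction using the strong connectedness of $K$ (third item of Definition~\ref{def:reg-simplicial-measure}). Suppose $\mathcal{G}$ is disconnected. Then we may partition the vertex set as $V = I \sqcup J$ with both parts non-empty and no edge of $\mathcal{G}$ joining $I$ and $J$, i.e.\ $H_{i,j}(\psi)=0$ for every $i\in I$ and $j\in J$. Set
$$K_I := \bigcup_{i\in I}\Lag_i(\psi) \quad\text{and}\quad K_J := \bigcup_{j\in J}\Lag_j(\psi).$$
Each is a finite union of closed sets, hence closed in $K$, and together they cover $K$.

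The key step is to show that the overlap $S := K_I \cap K_J$ is finite. Indeed, $S \subseteq \bigcup_{i\in I, j\in J} \Lag_{i,j}(\psi)$, and for every such pair $(i,j)$ the formula from Theorem~\ref{thm:regularity} together with $H_{i,j}(\psi)=0$ and $\min_\sigma \rho_\sigma > 0$ (regularity of $\mu$) forces
$$\Haus^{d_\sigma -1}\bigl(\Lag_{i,j}(\psi)\cap \sigma\bigr) = 0 \qquad \forall \sigma \in \Sigma.$$
By the preceding lemma, $\Lag_{i,j}(\psi)\cap\sigma$ is either empty, a singleton, or has dimension $d_\sigma -1 \geq 1$; the last possibility is ruled out by vanishing of the $(d_\sigma-1)$-Hausdorff measure. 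Since $\Sigma$ and $I\times J$ are finite, $S$ is finite.

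Now consider $K \setminus S = (K_I \setminus S) \cup (K_J \setminus S)$. The two pieces are disjoint (since $S = K_I\cap K_J$) and each is closed in $K\setminus S$ (as the trace of a closed set of $K$). Moreover, both are non-empty: if, say, $K_I \setminus S$ were empty, then $K_I \subseteq S$ would be finite, whence $\mu(K_I)=0$ because $\mu$ only charges simplices of dimension $\geq 2$; but $\mu(K_I) \geq \sum_{i\in I} G_i(\psi) > 0$ since $\psi \in \K^+$, a contradiction. Thus $K \setminus S$ splits as a disjoint union of two non-empty closed (hence clopen) subsets, contradicting the strong connectedness hypothesis on $K$.

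\textbf{Main obstacle.} The only delicate point is ensuring that the overlap $S$ is genuinely a finite set of points, which is exactly what the preceding dimension lemma — itself a consequence of the genericity condition — is designed to give. Once finiteness of $S$ is in hand, the contradiction with strong connectedness is immediate; the rest of the argument is bookkeeping with closed sets and the positivity condition $\psi \in \K^+$.
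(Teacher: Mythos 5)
Your proof is correct and takes essentially the same approach as the paper: the paper removes the finite set of singleton intersection points, fixes a connected component $C$ of $\mathcal{G}$, and shows that $\interior{L}\cap K^*$ and $\interior{L'}\cap K^*$ (with $L=\bigcup_{i\in C}\Lag_i(\psi)$, $L'$ its complement) give a clopen partition of $K^*$, forcing $C=\{1,\ldots,N\}$. You organize this as a direct contradiction with a partition $I\sqcup J$ and work with closed sets $K_I\setminus S$, $K_J\setminus S$ rather than interiors, but the key ingredients — the dimension lemma forcing either $\Haus^{d_\sigma-1}$-positive overlap (hence an edge) or a singleton, the strong connectedness after removing the finite $S$, and $\psi\in\K^+$ for non-emptiness — are identical.
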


\begin{proof}
  Consider the finite set
  $$S := \{ x \in \Rsp^d \mid \exists \sigma \in \Sigma, \exists i
  \neq j\in\{1,\hdots,N\},~ \Lag_{i,j}(\psi) \cap \sigma = \{x\} \}.$$
  For any simplex $\sigma\in\Sigma$, denote $\sigma^* =
  \sigma\setminus S$, and let $K^* = K\setminus S$. By definition of a
  regular simplicial measure (Def.~\ref{def:reg-simplicial-measure}),
  we know that $K^*$ is connected.  Let $C=\{i_1,\hdots,i_c\}$ be a
  connected component of the graph $\jocelyn{\mathcal{G}} $, and
  define $L = \bigcup_{i\in C} \Lag_{i}(\psi)$ and $L' =
  \bigcup_{i\not\in C} \Lag_i(\psi)$.

  \noindent\textbf{Step 1} We first show that for any simplex
  $\sigma\in \Sigma$, one must have either $\sigma^* \subset
  \interior{L}$ \emph{or} $\sigma^* \subset \interior{\Rsp^d\setminus
    L}$. For this, it suffices to prove that for any
  $\sigma\in\Sigma$, $\sigma^*\cap \partial {L} = \emptyset$. We argue
  by contradiction, assuming the existence of a point $x \in
  \partial{L} \cap \sigma^*$. Then, by definition of $\partial L$,
  there exists $i\in C$ and $j\not\in C$ such that $x\in \Lag_{i,j}(\psi)$. Since $x\in \sigma^*$, we know that $x$ does not belong to $S$. This implies that  $\Lag_{i,j}(\psi)\cap\sigma$ cannot be a singleton.
  %
  By the previous Lemma, this gives $\dim(\sigma\cap
  \Lag_{i,j}(\psi))=d_\sigma-1$ so that
$$ H_{ij}(\psi) = \jocelyn{\const(y_i, y_j)} \int_{\sigma\cap
  \Lag_{i,j}(\psi)} \rho_\sigma(x) \dd\Haus^{d_\sigma-1}(x) > 0.$$
This shows that $i$ and $j$ are in fact adjacent in the graph $\jocelyn{\mathcal{G}}$ and
contradicts $j\not\in C$.

\noindent  \textbf{Step 2} We now prove that $C$ is equal to
$\{1,\hdots,N\}$ by contradiction. 
 We group the simplices $\sigma \in \Sigma$ according to whether
$\sigma^*$ belongs to $\interior{L}$ or to $\interior{\Rsp^d\setminus   L}$.  
  {The sets $K^*_i$ are open for the topology induced on $K^*$ because $K^*_1=\interior{L} \cap K^*$ and $K^*_2=\interior{L'} \cap K^*$. Since they are also non empty, this    violates the connectedness of $ K^* $.}
We can conclude that $C=\{1,\hdots,N\}$, i.e. $\jocelyn{\mathcal{G}}$
is connected.
\end{proof}

\begin{proof}[Proof of Theorem \ref{thm:strict-concavity}]
  First note that the matrix $H$ is symmetric and therefore
  diagonalizable in an orthonormal basis. Gershgorin's circle theorem
  immediately implies that the eigenvalues of the matrix are
  negative. The theorem will be established if we are able to show
  that the nullspace of $H$ (i.e. the eigenspace corresponding to the
  eigenvalue zero) is the $1$-dimensional space generated by constant
  functions. \jocelyn{The computations presented here are similar to the ones in
      \cite[Lemma 3.3]{carlier2010knothe}.}
  Consider $v$ in the nullspace and let $i_0$ be an index where $v$
  attains its maximum, i.e. $i_0 \in \argmax_{1 \leq i \leq n} v_i $.
  Then, using $Hv=0,$
    \begin{align*}
        0 = (H v)_{i_0} = H_{i_0, i_0} v_{i_0} + \sum_{i \neq i_0}
        H_{i, i_0} v_i &\leq H_{i_0, i_0} v_{i_0} + \sum_{i \neq i_0}
        H_{i, i_0} v_{i_0} \\ &= H_{i_0, i_0} v_{i_0} + \left( \sum_{i
          \neq i_0} H_{i, i_0} \right) v_{i_0} = 0.
    \end{align*}
The inequality follows from $v_i \leq v_{i_0}$ and from $H_{i,i_0}\geq
0$, while the third equality comes from $H_{i_0, i_0} = -\sum_{i \neq
  i_0} H_{i_0, i}.$ This allows us to write $v_{i_0}$ as convex
combination of values $v_i\leq v_{i_0},$
$$ v_{i_0} = \sum_{i\neq i_0} \frac{H_{i,i_0}}{\sum_{j\neq i_0}
  H_{\jocelyn{j},i_0}} v_i.$$ This means that for all vertex $i$ adjacent to
$i_0$ in the graph $\jocelyn{\mathcal{G}}$ (so that $H_{i,i_0}\neq 0$), one must have $v_i
= v_{i_0}$. In particular, the function $v$ also attains its maximum
at $i$. By induction and using the connectedness of the graph $\jocelyn{\mathcal{G}}$,
this shows that $v$ has to be constant, i.e. $\Ker(H) =
\vect(\{\mathrm{cst}\})$.
\end{proof}


\section{Convergence analysis}
\label{sec:convergence}

In this section, we show the convergence of a damped Newton algorithm
for a general function $G:\Rsp^N\to\Rsp^N$ that satisfies some
regularity and strict monotonicity conditions. As a direct
consequence, using the results of Sections \ref{sec:regularity} and
\ref{sec:motonicity}, we show the convergence with a linear speed of
the damped Newton algorithm to solve the non-linear
equation~\eqref{eq:MA}. We denote by $\mathcal{P}_N$ the set of
$\nu=(\nu_1,\cdots,\nu_N) \in \Rsp^N$ that satisfies $\nu_i \geq 0$
and $\sum_i \nu_i =1$. For a given function $G:\Rsp^N \to
\mathcal{P}_N$ and $\eps >0$, we define the set
$$
\K^{\epsilon}:= \left\{ \psi \in \Rsp^N \mid \forall i,~ G_i(\psi) \geq \epsilon \right\},
$$
where $ G(\psi) = (G_i(\psi))_{1 \leq i \leq  N}$. We then have the following
proposition, which is an adaptation to our setting of Theorem 1.5 in \cite{kitagawa2016newton} and
Proposition 2.10 in \cite{mirebeau2015discretization}.

\begin{proposition}     \label{prop:convergence-newton}
    Let $G:\Rsp^N \to \mathcal{P}_N$ be a function which is invariant
    under the addition of a constant\jocelyn{, i.e. a multiple of $(1, \ldots,
        1) \in \Rsp^N $,} and $\eps>0$. We assume the following properties:
    \begin{enumerate}
        \item (Compactness) For every $a\in \Rsp$, the following set is compact:
        $$ \K^{\epsilon}_a := \K^{\epsilon} \cap \left\{ \psi \in \Rsp^N \mid
            \sum_{i=1}^N \psi_i = a\right\} = \left\{ \psi \in \Rsp^N \mid
            \forall i,~ G_i(\psi) \geq \epsilon \mbox{ and } \sum_{i=1}^N \psi_i = a\right\}.
        $$
                \item ($\Class^1$ regularity) The function $G$ is of class $
            \Class^1 $ on $
            \K^{\epsilon}$.
        \item (Strict monotonicity) We have:
            $$ \forall \psi \in \K^{\epsilon},~ \forall v \in \cstperp \setminus
            \{0\},~ \sca{DG(\psi) v}{v} < 0 $$
    \end{enumerate}
   Then Algorithm~\ref{algo:newton} converges with linear speed. More
   precisely, if $ \nu \in \mathcal{P}_N$ and $ \psi_0 \in \Rsp^N $
   are such that $ \epsilon_0 = \frac{1}{2} \min \left( \min_i
   G_{i}(\psi_0), \min_i \nu_i \right) > 0$, then the iterates $
   (\psi^k) $ of Algorithm \ref{algo:newton} satisfy the following
   inequality, where $ \tau^* \in (0,1] $ depends on $ \epsilon_0$:
    $$ \nr{G(\psi^{k+1}) - \nu} \leq \left( 1 - \frac{\tau^\star}{2} \right)
    \nr{G(\psi^{k}) - \nu}, $$

\end{proposition}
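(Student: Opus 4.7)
The plan is to follow the standard template for damped Newton's methods (as in \cite{mirebeau2015discretization,kitagawa2016newton}), adapted to the fact that $DG$ is only invertible on the hyperplane $\cstperp$. Three ingredients are needed: well-posedness of the Newton direction (together with invariance of $\sum_i \psi_i$), termination of the inner damping loop at each step, and a uniform positive lower bound $\tau^*$ on the damping parameter produced by a compactness argument.

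First, I would verify well-posedness. Since $G$ is constant-invariant, $(1,\ldots,1) \in \Ker DG(\psi)$; by strict monotonicity, $DG(\psi)|_{\cstperp}$ is negative definite, so $\Ker DG(\psi) = \vect\{(1,\ldots,1)\}$ and $\mathrm{Im}\,DG(\psi) = \cstperp$. Since $G(\psi)$ and $\nu$ both lie in $\mathcal{P}_N$, the residual $G(\psi)-\nu$ belongs to $\cstperp$, so the Newton direction $v^k := -DG(\psi^k)^+(G(\psi^k)-\nu)$ lies in $\cstperp$ and satisfies $DG(\psi^k)\,v^k = -(G(\psi^k)-\nu)$. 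In particular $\sum_i \psi^k_i = a := \sum_i \psi^0_i$ is preserved throughout the iterations. For termination of the inner loop at step $k$, the $\Class^1$ regularity of $G$ at $\psi^k$ yields
\[
G(\psi^k + t v^k) - \nu = (1-t)(G(\psi^k) - \nu) + o(t) \quad \text{as } t \to 0,
\]
so for $t = 2^{-\ell}$ small enough, the residual bound $\nr{\cdot} \leq (1 - 2^{-(\ell+1)})\nr{G(\psi^k)-\nu}$ holds; together with continuity of each $G_i$ at $\psi^k \in \K^{\epsilon_0}$, this forces termination of the loop in finitely many halvings.

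The main obstacle is producing a $\tau^* \in (0,1]$ that controls the damping uniformly in $k$. The key observation is that every iterate belongs to the compact set $\K^{\epsilon_0}_a$ from assumption (1): the lower bound $G_i(\psi^k) \geq \epsilon_0$ is enforced by the first condition in the loop, and $\sum_i \psi^k_i = a$ is preserved because $v^k \in \cstperp$. On this compact set $DG|_{\cstperp}$ is continuous and invertible, hence its pseudo-inverse has bounded operator norm; combined with the monotone decrease $\nr{G(\psi^k) - \nu} \leq \nr{G(\psi^0) - \nu}$, this gives $\nr{v^k} \leq R$ for some universal $R$ depending only on $\mu,\nu,\epsilon_0$. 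The segment $[\psi^k, \psi^k + v^k]$ therefore lies in a fixed compact enlargement $\Omega$ of $\K^{\epsilon_0}_a$, on which $DG$ is uniformly continuous. Upgrading the Taylor expansion displayed above to a quantitative version using this uniform modulus, together with the uniform bounds on $DG$ and on $(DG|_{\cstperp})^{-1}$ over $\Omega$, yields a single $\tau^* \in (0,1]$ such that for every $\psi \in \K^{\epsilon_0}_a$ both algorithmic conditions are satisfied at $t = \tau^*$. Hence at every step $2^{-\ell_k} \geq \tau^*/2$ and $\nr{G(\psi^{k+1}) - \nu} \leq (1 - \tau^*/2)\,\nr{G(\psi^k) - \nu}$, giving linear convergence; finite termination follows as soon as $(1 - \tau^*/2)^k \nr{G(\psi^0) - \nu} < \eta$.
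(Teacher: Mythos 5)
Your proposal follows the same overall template as the paper's proof: Taylor expansion of $G$ along the Newton direction, uniform continuity of $DG$ on a compact set, and a compactness argument to produce a uniform damping parameter $\tau^*$. The organization is slightly different and in one respect cleaner. The paper defines, for each $\psi\in\K^\epsilon_a$, a pointwise threshold $\tau(\psi)=\min(\alpha_1/\nr{v(\psi)},\alpha_2/\nr{v(\psi)},1)$ and must then show that $\psi\mapsto\tau(\psi)$ is continuous (with a separate argument at points where $G(\psi)=\nu$, i.e.\ $v(\psi)=0$) before invoking compactness to take the infimum. You sidestep that step entirely: you first derive a uniform bound $\nr{v^k}\le R$ from compactness of $\K^{\epsilon_0}_a$ and boundedness of $\nr{DG^+}$, and then read off a single $\tau^*$ from the quantitative remainder bound; this makes the dependence of $\tau^*$ on $\epsilon_0$ more transparent and avoids the delicate continuity argument for $\tau(\psi)$.

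One point to tighten: you assert that $DG$ is uniformly continuous on a ``compact enlargement $\Omega$ of $\K^{\epsilon_0}_a$'' containing the segments $[\psi^k,\psi^k+v^k]$. But the hypothesis only gives $\Class^1$ regularity of $G$ on $\K^\epsilon$, and a metric $R$-neighborhood of $\K^{\epsilon_0}_a$ need not be contained in $\K^\epsilon$, so this step is not justified as stated. The paper has a cousin of the same subtlety (it applies the modulus $\omega$ of $DG|_{\K^\epsilon_a}$ at the intermediate points $\psi_\sigma$ before it is established that $\psi_\sigma\in\K^\epsilon_a$), and the standard fix in both cases is a short continuation argument: as long as $\psi_\sigma$ remains in $\K^\epsilon_a$ the remainder bound holds, which in turn guarantees $G_i(\psi_\sigma)\ge\epsilon$ and hence that $\psi_\sigma$ stays in $\K^\epsilon_a$, so the admissible interval is open and closed. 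If you add that, your version of the argument is complete and, as noted, avoids the pointwise $\tau(\psi)$ machinery.
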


\begin{proof}
Let $ \nu \in \mathcal{P}_N$ and $ \psi^0 \in \Rsp^N $ such that $
\jocelyn{\epsilon} = \frac{1}{2}  \min \left( \min_i G_{i}(\psi^0), \min_i \nu_i \right)
$ is positive.
 We put $a=\sum_{i=1}^N \psi^0_i$. Since $\K^{\epsilon}_a$ is a compact set,
 the continuous
    map $\D G$ is  uniformly continuous on $ \K^{\epsilon}_a$, i.e. there exists a
    function $\omega : \Rsp^+ \to \Rsp^+$ that satisfies $ \lim\limits_{x \to 0}
    \omega(x) = \omega(0) = 0$ and such that
    $$
     \forall \psi, \tilde\psi \in \K^{\epsilon}_a,~ \nr{DG(\psi) - DG(\tilde\psi)} \leq
    \omega(\nr{\psi - \tilde\psi}).
    $$
Note also that the modulus of continuity  $\omega$  can be assumed to be an increasing function.
    For any $\psi \in \K^{\epsilon}_a$, we let $ v = DG^{+}(\psi) (G(\psi) - \nu) $ and
     $ \psi_\tau = \psi - \tau v $ for any $\tau \geq 0 $.
   Since $G$ is of class $\Class^1$,
    a Taylor expansion in $\tau$ gives
    \begin{equation}
        \label{eq:taylor-g}
        G(\psi_\tau) = G(\psi - \tau DG^{+}(\psi) (G(\psi) - \nu)) = (1 - \tau)
        G(\psi) + \tau \nu + R(\tau)
    \end{equation}
    where $ R(\tau) = \int_0^\tau (DG(\psi_\sigma) - DG(\psi)) v d\sigma $ is
    the integral remainder. Then, we can bound the norm of $ R(\tau) $
    \begin{align*}
        \nr{R(\tau)} &= \nr{\int_0^\tau (DG(\psi_\sigma) - DG(\psi)) v d\sigma} \\
                     &\leq \nr{v} \int_0^\tau \omega(\nr{\psi_\sigma - \psi})
                     d\sigma = \nr{v} \int_0^\tau \omega(\sigma \nr{v}) d\sigma
                     \\
                     &\leq \tau \nr{v} \omega(\tau \nr{v})
    \end{align*}
    where we have used the fact that $ \omega $ is an increasing function.\\

    \noindent{\textbf{Step 1}}
    We first want to show that for every $\psi \in \K^{\epsilon}_a$ there exists $\tau(\psi)>0$ such that
    \begin{equation}\label{eq:cv-algo}
        \forall \tau \in (0, \tau(\psi))\quad \quad
    \psi_\tau \in \K^{\epsilon}_a
    \quad \mbox{and}\quad
    \nr{G(\psi_\tau) - \nu} \leq \left( 1 - \frac{\tau}{2} \right)
    \nr{G(\psi) - \nu}.
    \end{equation}
    Recall that for every $i \in \{1, \ldots, N\}$ one has $ \nu_i \geq 2 \epsilon
    $ and $ G_i(\psi) \geq \epsilon $. Thus one gets
    $$ G_i(\psi_\tau) \geq (1 - \tau) G_i(\psi) + \tau \nu_i + R_i(\tau) \geq (1
    + \tau) \epsilon - \nr{R(\tau)}.$$
    So if we choose $ \tau $ such that $ \nr{R(\tau)} \leq \tau \epsilon $ then
    $ G_i(\psi_\tau) \geq \epsilon $ and $ \psi_\tau \in \K^{\epsilon}$.
    Now, since $\lim\limits_{x \to 0} \omega(x) = 0 $, there exists $\alpha_1 > 0$ such that for every $ 0 \leq \sigma \leq \alpha_1$, one has
    $\omega(\sigma) \leq \epsilon / \nr{v} $.
    This implies that if $ \tau \leq \alpha_1
    / \nr{v} $, then $\nr{R(\tau)} \leq \tau \epsilon$ and consequently $ \psi_\tau \in \K^{\epsilon}$. Note that $G(\psi) - \nu$ belongs to $\cstperp$ and that $ DG(\psi) $ is an isomorphism from $\cstperp $ to $ \cstperp $. We deduce that $\psi_\tau - \psi = \tau v$ belongs to $\cstperp$, hence $\psi_\tau \in \K^{\epsilon}_a$.\\

     From Eq.~\eqref{eq:taylor-g}, we have
    $ G(\psi_\tau) - \nu = (1 - \tau) (G(\psi) - \nu) + R(\tau)$.
    So, to get the second condition of Equation \eqref{eq:cv-algo}, it is sufficient to show that
     $ \nr{R(\tau)} \leq
    (\tau/2) \nr{G(\psi) - \nu} $. The estimation on $ \nr{R(\tau)} $ and
    the definition of $ v $ gives us
    $$ \nr{R(\tau)} \leq \tau \nr{DG^{+}(\psi)} \nr{G(\psi) - \nu} \omega(\tau
    \nr{v}). $$
    Still from the continuity of $\omega$ at $0$, we can find $ \alpha_2 > 0 $ such
    that for every $\tau \leq \alpha_2 / \nr{v}$ one has  $\omega(\tau \nr{v}) \leq
    \epsilon / 2 \nr{DG^{+}(\psi)}$, thus
    $ \nr{R(\tau)} \leq (\tau/2) \nr{G(\psi) - \nu}$. Therefore, by putting $\tau(\psi):=\min(\alpha_1 /
    \nr{v(\psi)} ,\alpha_2 /
    \nr{v(\psi)} , 1) $, Equation \eqref{eq:cv-algo} is proved. Note that we impose $\tau(\psi)$ to be less than $1$.\\

    \noindent{\textbf{Step 2}}
    The function $ G $ is of class $ \Class^1 $ on $ \K^{\epsilon}_a$. For every $ \psi $ in $ \K^{\epsilon}_a$,
$ DG(\psi) $ is an isomorphism from $
    \cstperp $ to $ \cstperp $ and its inverse $ DG^{+}(\psi) $ depends
    continuously on  $ \psi $. Since  $\sum_i G_i(\psi) = \sum_i \nu_i $, $G(\psi) - \nu$ belongs to $\cstperp $,
    so the function $ v(\psi) = DG^{+}(\psi) (G(\psi) -
    \nu) $ is also continuous by composition.  If $G(\psi) \neq \nu$, the strict
    monotonicity of $ G $ ensures that $ v(\psi) \neq 0 $ and so $\tau(\psi)=\min(\alpha_1 /
    \nr{v(\psi)} ,\alpha_2 / \nr{v(\psi)} , 1) $ is also continuous in $\psi$. If $G(\psi)=\nu$, then $v(\psi)=0$. However, by continuity of $v$, the function $\tilde{\psi}\mapsto \tau(\tilde{\psi})$ is constant equal to $1$ in a neighborhood of $\psi$. Hence the function $\psi \mapsto \tau(\psi)$ is globally continuous.
     Therefore, the infimum of $\tau(\psi)$ over the compact set $ \K^{\epsilon}_a
     $ is attained at a point of $ \K^{\epsilon}_a$, thus is strictly positive.
     We deduce that we can take a uniform bound $\tau(\psi)=:\tau^*>0 $ in  Equation~\eqref{eq:cv-algo} that does not depend on $\psi$.
    This directly implies the convergence of the damped Newton algorithm with linear speed.
\end{proof}


\begin{proof}[Proof of Theorem~\ref{coro:convergence-newton}]
The function $G$ appearing in \eqref{eq:MA} satisfies the regularity
condition (Theorem \ref{thm:regularity}) and the monotonicity condition
(Theorem \ref{thm:strict-concavity}) needed in
Proposition~\ref{prop:convergence-newton}. It remains to show the compactness
condition.
Let us take $ a \in \Rsp $ and let us show that $ \K_a^{\epsilon} $
is compact. It is easy to see that $ \K_a^{\epsilon} $ is closed since $ G $ is
continuous. Let $\psi\in \K_a^{\epsilon}$, $ i \neq j $ and $x\in \Lag_i(\psi)$. Then one has
$$
\psi_i \leq \psi_j + \nr{x-y_j}^2 - \nr{x-y_i}^2 \leq \psi_j + \diam(K\cup Y)^2,
$$
where  $\diam(K\cup Y)$ is the diameter of $K\cup Y$. 
So the differences $ | \psi_i - \psi_j | $ are bounded by $\diam(K\cup Y)^2$. Combined with the fact that $ \sum_i \psi_i $ is constant, one has that $ \psi $ is bounded by a constant independent on $\psi$. Thus, $ K_a^{\epsilon} $ is compact.
\end{proof}


\section{Numerical results}
\label{sec:numerical-results}

In this section, we solve the optimal transport problem in $\Rsp^3$ between triangulated surfaces (possibly with holes,
with or without a  boundary) and point clouds, for the quadratic cost
\jocelyn{
and show it can be used in different settings: \emph{optimal quantization} of a
probability density over a surface, \emph{remeshing} and \emph{point set registration} on a mesh.
}
The source density is assumed to be affine on each triangle of the triangulated surface.
One crucial aspect of the algorithm is the exact computation of the combinatorics of the Laguerre cells, \textit{i.e.} the intersection between a triangulated surface and a 3D power diagram, see Equation~\eqref{eq:lag}. Another important aspect is the initialization step in  Algorithm~\ref{algo:newton}, \textit{i.e. }finding a set of weights $ \psi^0 $ which guarantees that all the initial Laguerre cells have a positive mass.
We first explain the algorithm to compute the Laguerre cells, describe how we take the initial weights, before presenting some results.

\subsection{Implementation}


We describe here briefly an algorithm to compute the combinatorics of the
intersection of a Power diagram $\Pow := (\Pow_i)_i$ with a triangulated surface
$K=\cup_{\sigma \in \Sigma} \sigma$, with triangles $ \sigma \in \Sigma $.
Note that in general the intersection of a power cell with $K$ is not convex and
can even have several connected components (as illustrated for instance in
Figure \ref{fig:experiments-uniform}, in the second and third rows). We encode here the
triangulated surface $K$ with a connected graph $ G_1 $ where $ G_1 $ is the
$1$-skeleton of $ K $ (seen as a subset of $ \Rsp^3 $). Similarly, the intersection of the 2D
faces of the power diagram with the triangulated surface $K$, namely $G_2=\cup_i ( K
\cap \partial \Pow_i )$, is also encoded by a graph. Let us remark
that $ G_2 $ can be disconnected.
More precisely, one proceeds as follows:
\begin{enumerate}
    \item We first split the edges in the graph $ G_1 $ at points in $ G_1\cap
        G_2 $.  Since $ G_1 $ is connected, this can be done by a simple traversal, in which
        we need to intersect the edges of the triangulation with the 2-dimensional
        power cells.
    \item We then traverse $ G_2 $ starting from vertices in $ G_1\cap G_2 $ by
        intersecting the 2-dimensional power-cells with triangles. $ G_2 $ might be
        disconnected, but we can discover the connected components using the
        non-visited vertices in $ G_1\cap G_2 $. This step provides us with both the
        geometry and connectivity of $ G_1\cup G_2 $, and also an orientation coming
        from the underlying triangulated surface $ K $.
    \item The graph $ G_1 \cup G_2 $ is embedded on the triangulated surface $ K
        $, and the connected components of $K\setminus (G_1\cup G_2) $ are (open) convex
        polygons. Each of these polygons represents an intersection of the form
        $ \Pow_i \cap \sigma $. The boundary of these polygons can easily be
        reconstructed from $ G_1\cup G_2 $ and the orientation (obtained in the
        second step).
\end{enumerate}

The main predicates needed here are the intersection tests between a
2D face and a segment and between a power edge (1D face) and a triangle. These
predicates can easily be implemented in an exact manner using, for example,
the filtered predicates mechanism provided by the \texttt{CGAL} library \cite{cgal}.

\vspace{.3cm}\noindent \textbf{Numerical integration.}
The computation of $ G_i(\psi) $ requires the evaluation of integrals of the
form $ \int_{\Lag_i(\psi) \cap \sigma} \rho_\sigma(x) \dd \Haus^2(x) $ where $
\rho_\sigma : \Rsp^3 \to \Rsp^+ $ is an \jocelyn{affine} density. In order to evaluate these integrals exactly, we use the
classical Gaussian quadrature formulae. In our setting, we have that if $ t =
[a, b, c] $ is a triangle and $\rho:t\to\Rsp$ is affine, then $ \int_{t} \rho(x)
\dd \Haus^2(x) = \Area(t) \cdot \rho((a +b + c)/3) $.

\vspace{.3cm}\noindent \textbf{Choice of the initial weights.}  The
following proposition shows how to choose the initial weights so
as to avoid empty Laguerre cells.  

\begin{proposition} Let $ K \subset \Rsp^d $ be a
compact set, $ Y = \{y_1, \ldots, y_N \} \subset \Rsp^d $ be a point
cloud and $ \psi^0_i = -\dd(y_i,K)^2 $. Then, all the Laguerre cells
$\Lag_i(\psi^0)$ are non-empty:
$$ \emptyset \neq \{x\in
K \mid \dd(y_i,K)=\nr{x-y_i}\} \subseteq \Lag_i(\psi^0). $$
    \end{proposition}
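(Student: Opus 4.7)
The plan is to exhibit an explicit point in each Laguerre cell $\Lag_i(\psi^0) = \Pow_i(\psi^0) \cap K$, and the natural candidate is a nearest point of $K$ to $y_i$. Since $K$ is compact and non-empty, the continuous function $x \mapsto \|x - y_i\|$ attains its minimum on $K$, so the set $\{x \in K \mid \|x - y_i\| = d(y_i, K)\}$ is non-empty. This handles the non-emptiness claim at the level of the set on the left-hand side of the inclusion.

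The main work is then to verify the inclusion $\{x \in K \mid \|x - y_i\| = d(y_i, K)\} \subseteq \Lag_i(\psi^0)$. Fix such an $x$. With $\psi^0_k = -d(y_k, K)^2$ for every $k$, the defining inequality for $\Pow_i(\psi^0)$ reads, for each $j \neq i$:
\[
\|x - y_i\|^2 - d(y_i, K)^2 \;\leq\; \|x - y_j\|^2 - d(y_j, K)^2.
\]
By the choice of $x$, the left-hand side vanishes. Because $x \in K$, one has $\|x - y_j\| \geq d(y_j, K)$, so the right-hand side is non-negative. Thus the inequality holds, $x \in \Pow_i(\psi^0)$, and since $x \in K$ this gives $x \in \Lag_i(\psi^0)$.

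There is essentially no obstacle here: everything reduces to the definition of $\psi^0_i$ and the defining inequality of a power cell, together with compactness of $K$ to ensure the nearest point exists. I would present the proof in three lines: (i) compactness yields a realizer of the distance $d(y_i,K)$; (ii) plugging it into the Power-cell inequality gives $0 \leq \|x-y_j\|^2 - d(y_j,K)^2$, which is immediate; (iii) conclude $\Lag_i(\psi^0) \supseteq \{x \in K \mid \|x-y_i\|=d(y_i,K)\} \neq \emptyset$.
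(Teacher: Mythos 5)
Your proof is correct and follows essentially the same argument as the paper's: pick a point $x\in K$ realizing $d(y_i,K)$ (guaranteed by compactness), and check the Power-cell inequality by noting that the left-hand side vanishes while $\|x-y_j\|\geq d(y_j,K)$ for all $j$ since $x\in K$. Your explicit mention of compactness to guarantee the existence of the nearest point is a small but welcome addition over the paper's terser presentation.
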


\begin{proof}
   Let $ i \in \{1, \ldots, N \} $, and $ x \in K
    $ be such that $ d(y_i, K) = \nr{x - y_i}
    $.
    Then for $ j \in \{1, \ldots, N \} $
    $$
        \nr{x - y_j}^2 + \psi^0_j 
        = \nr{x - y_j}^2 -
        d(y_j, K)^2 
        \geq d(y_j, K)^2 - d(y_j, K)^2 
        = 0 
        = \nr{x - y_i}^2 +
        \psi^0_i,
        $$
meaning that $ x \in \Lag_i(\psi^0) $.\end{proof}

In particular, this proposition applies to the case where $ K $ is a
triangulated surface. Thus, it means that we can find weights such that the initial
Laguerre cells are not empty. In practice, if needed, we slightly perturb $ \psi^0 $ to
ensure that all the Laguerre cells also have non empty interior, thus have a positive mass.

\subsection{Results and applications}
We compute the optimal transport map between a piecewise linear measure defined on a triangulated surface $ K $ and a
discrete measure defined on a 3D point cloud. Even if we can handle non uniform
measures, in the examples presented here, the source density is uniform over the
triangulation: $ \rho_\sigma = 1 / \Area(K) $ for every $ \sigma \in \Sigma$,
where $\Area(K)$ is the area of $K$. The point cloud is chosen to be a noisy
version of points sampled on the mesh. In the examples, \jocelyn{the solutions are
computed up to an error of $\eta = 10^{-6} $.}

The first two rows of Figure \ref{fig:experiments-uniform} displays
results for a uniform target measure and the last two for a
non-uniform one.  Remark that in this case the non uniformity creates
smaller Laguerre cells on the right side. Note that the centroids of
the Laguerre cells provide naturally a correspondence between the
point cloud and the triangulated surface: we associate to each $y_i$
the centroid of the Laguerre $\Lag_i(\psi^k)$, where $\psi^k$ is the
output of Algorithm~\ref{algo:newton}.  In practice, the number of
iterations remains small even for large point sets. For instance, if
we choose $ 10,000 $ noisy samples on the torus, the algorithm takes $
16 $ iterations to solve the problem.

\begin{remark}
We also underline that the Laguerre cells can be non \jocelyn{geodesically} convex and even
disconnected (as illustrated in the second and third \jocelyn{columns} of Figure
            \ref{fig:experiments-uniform}) which shows that our method
            handles more general settings than \cite{kitagawa2016newton},
            \textit{i.e.}  cost functions whose Laguerre cells cannot be convex in any chart
            (violating the hypothesis of \cite{kitagawa2016newton}, Def 1.1).
\end{remark}

\begin{figure}[h]
    \centering
    \includegraphics[width=.24\textwidth]{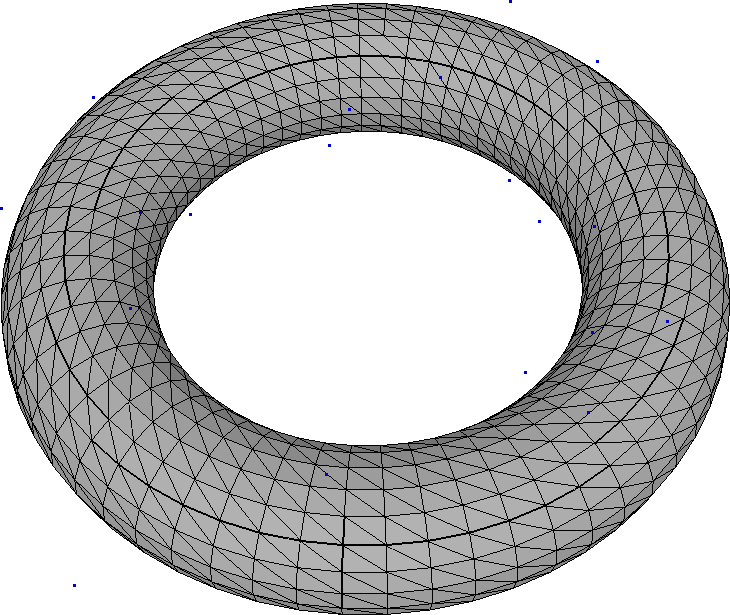}
    \includegraphics[width=.24\textwidth]{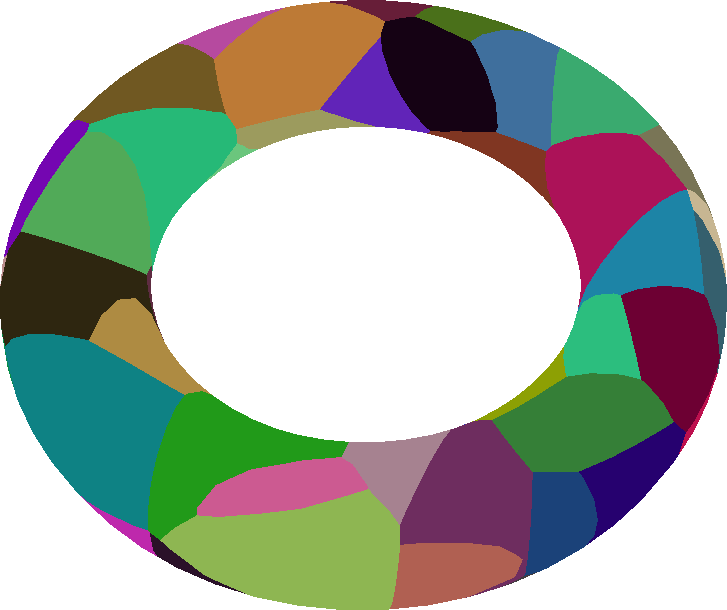}
    \includegraphics[width=.24\textwidth]{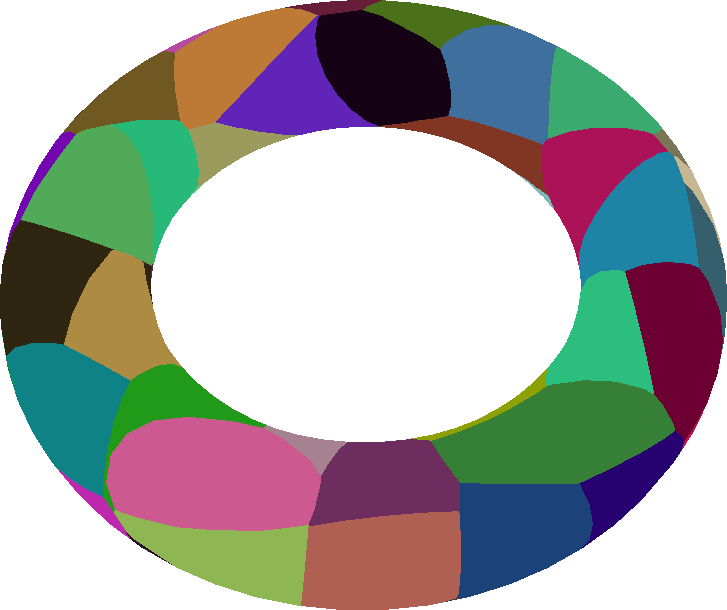}
    \includegraphics[width=.24\textwidth]{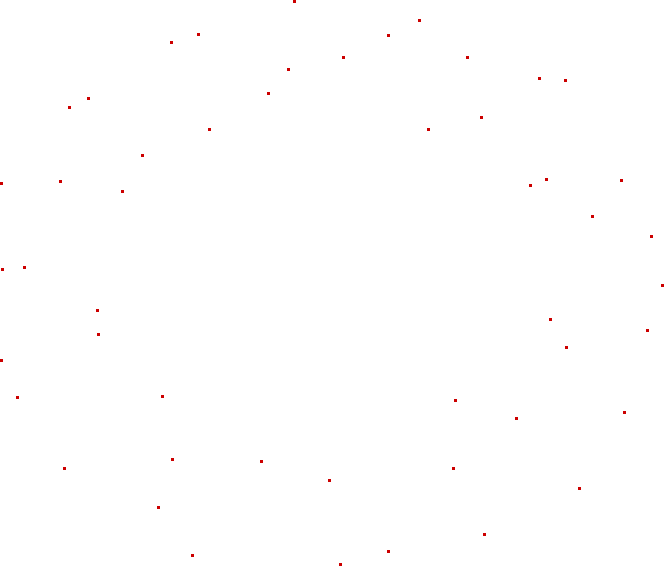}

    \includegraphics[width=.24\textwidth]{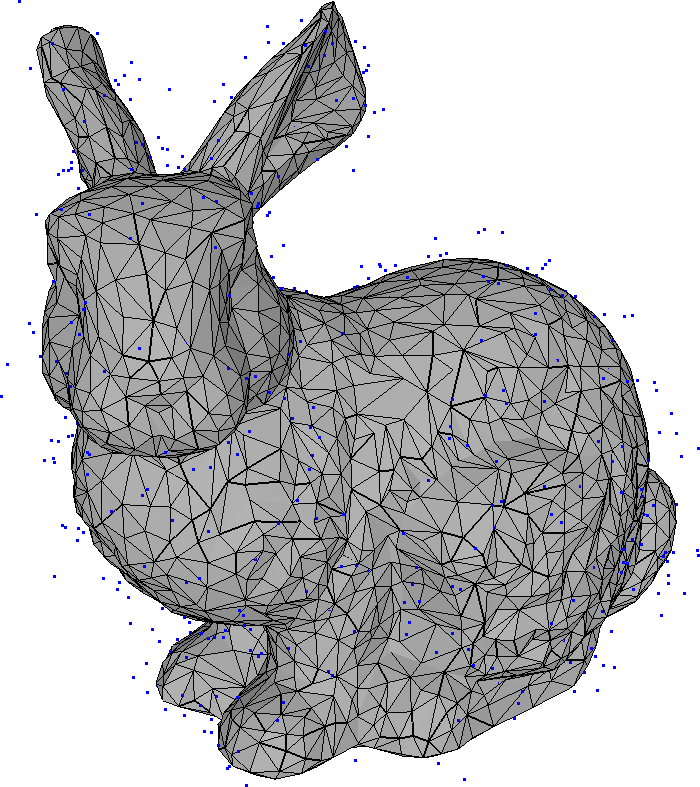}
    \includegraphics[width=.24\textwidth]{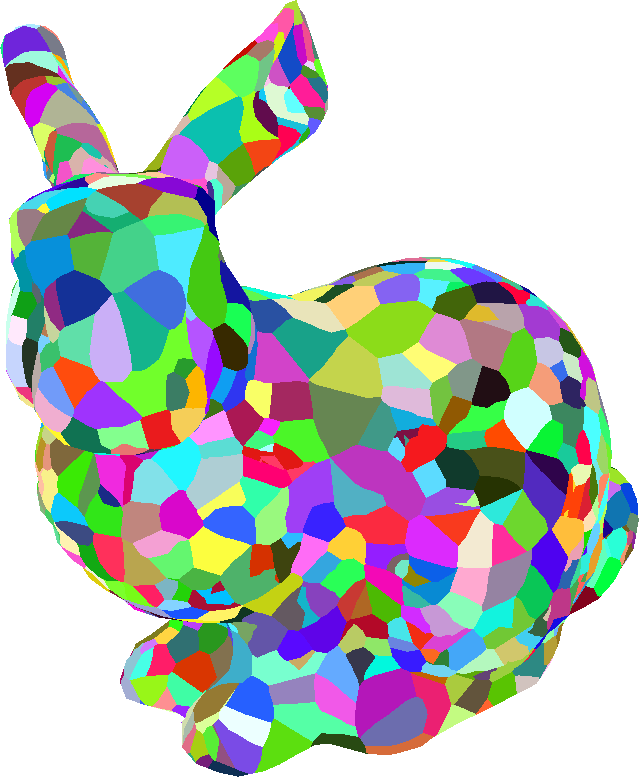}
    \includegraphics[width=.24\textwidth]{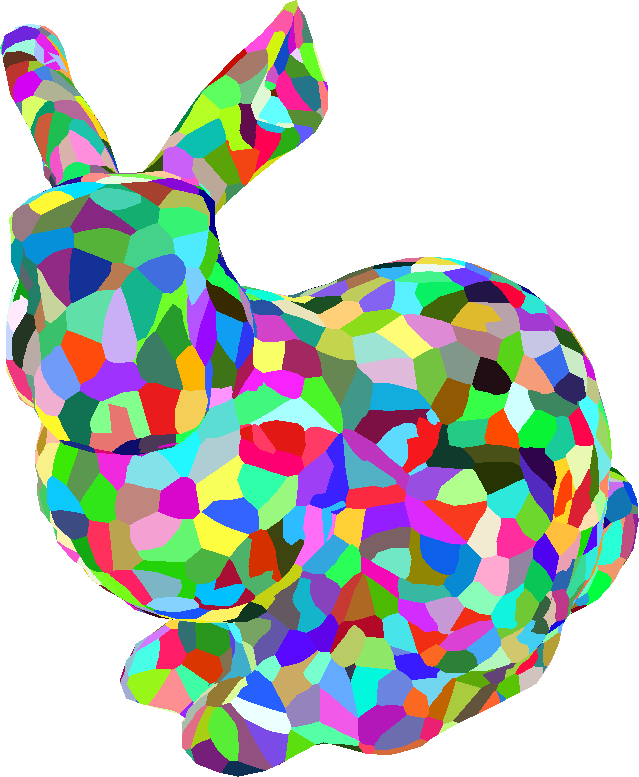}
    \includegraphics[width=.24\textwidth]{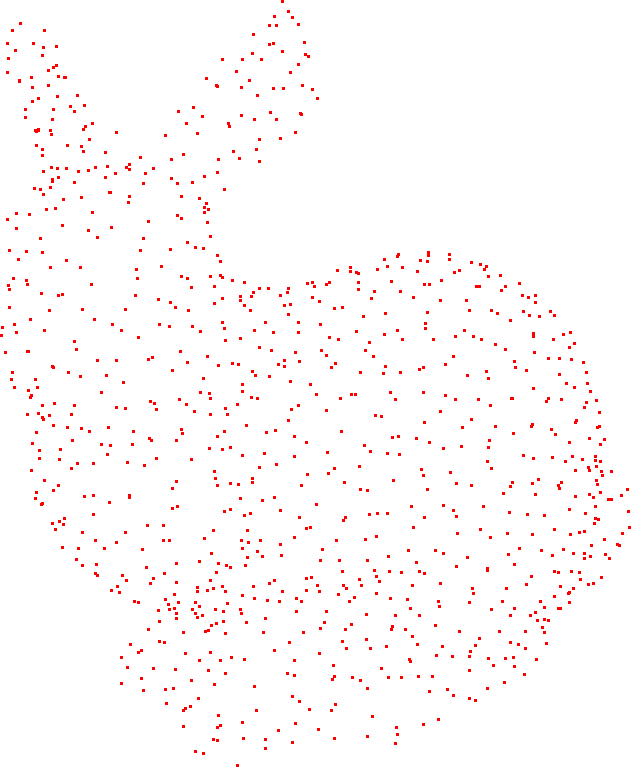}

    \includegraphics[width=.24\textwidth]{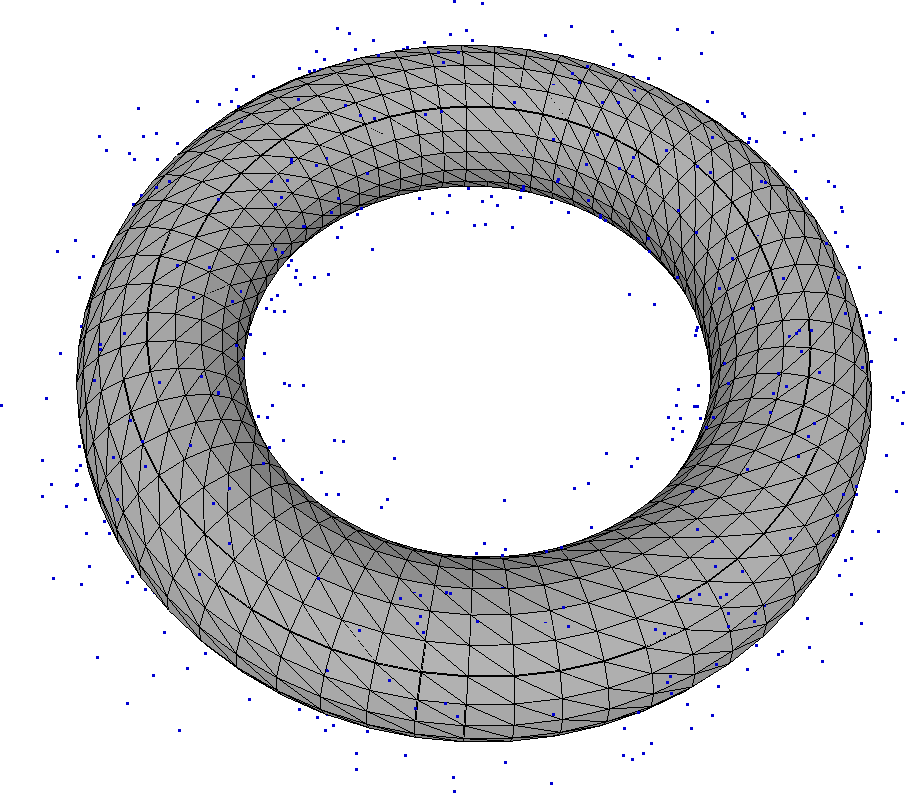}
    \includegraphics[width=.24\textwidth]{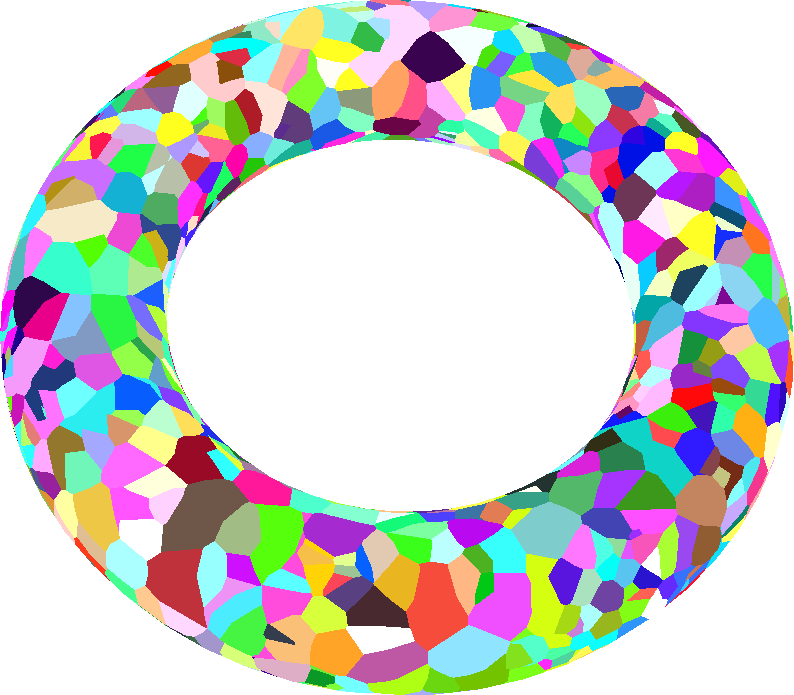}
    \includegraphics[width=.24\textwidth]{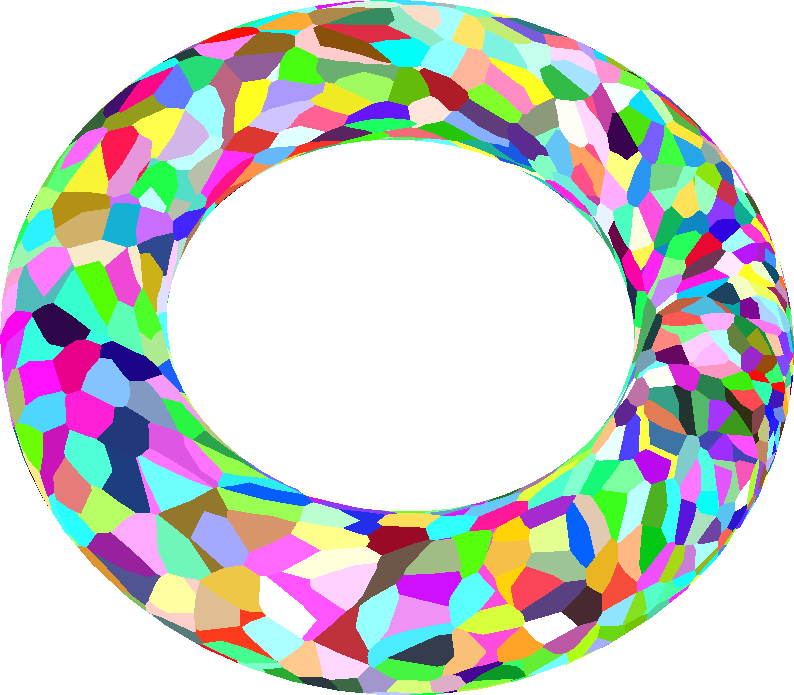}
    \includegraphics[width=.24\textwidth]{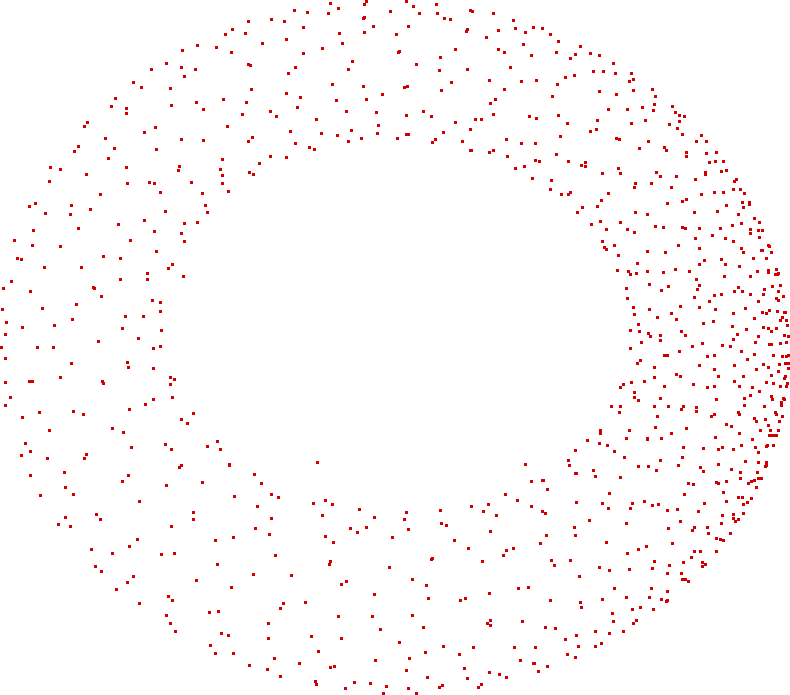}

    \includegraphics[width=.24\textwidth]{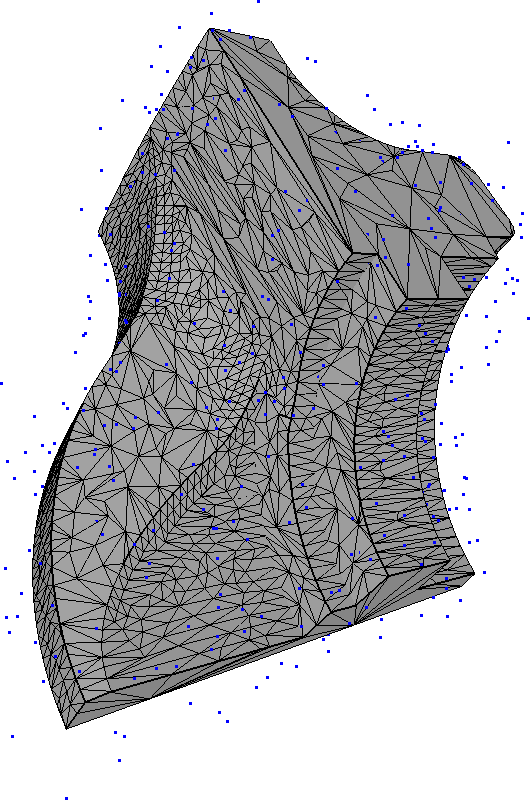}
    \includegraphics[width=.24\textwidth]{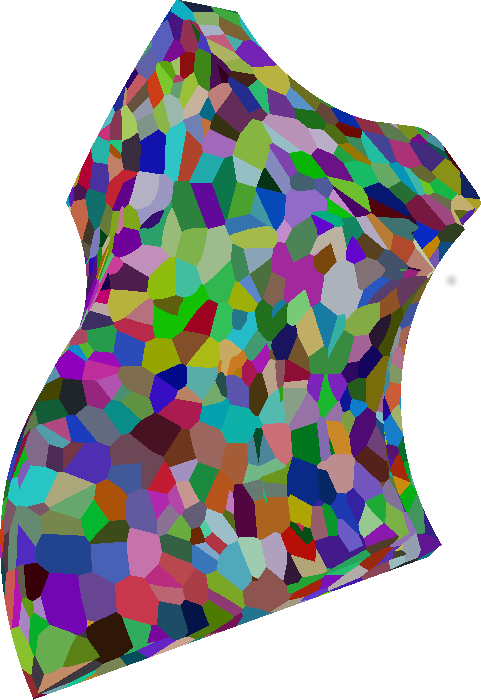}
    \includegraphics[width=.24\textwidth]{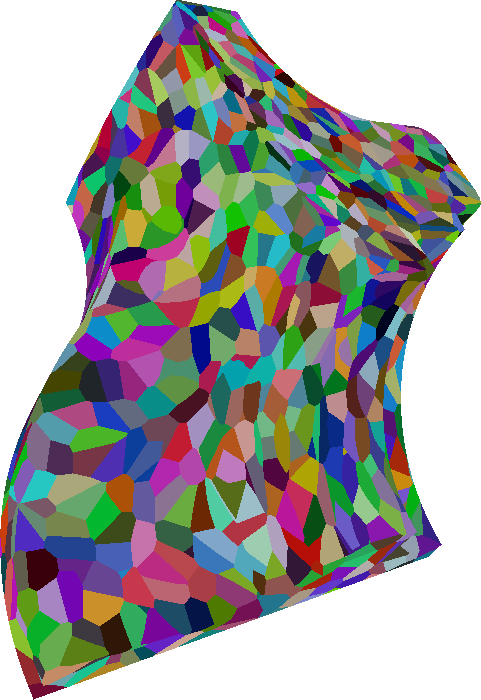}
    \includegraphics[width=.24\textwidth]{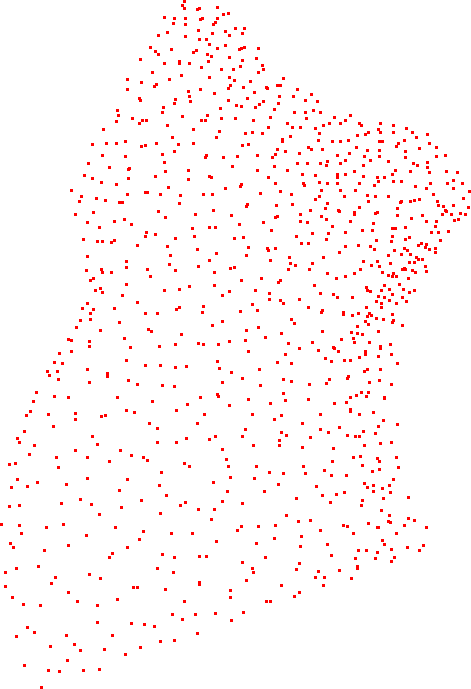}

    \caption{From left to right: Mesh and initial point cloud (in blue), Initial
    Laguerre cells, Final Laguerre cells, Centroids of the final Laguerre
    cells.
    The source measure is uniform. In the first two rows, the target
    \jocelyn{density} is uniform while in the last two, it linearly decreases from left to right.
    In the first row, $ N = 50 $ while in the other rows, $ N = 1000 $.
    Computation time (number of iterations): 3s (4) /  41s (7) / 74s (14) / 58s
    (9).}
    \label{fig:experiments-uniform}
\end{figure}

\jocelyn{ We now show how to use this algorithm as a building block
  for higher level operations: optimal quantization of surfaces,
  remeshing and point set registration. The goal here is not to
  compete with state of the art algorithms for these applications but
  rather to show the applicability of Algorithm \ref{algo:newton} in
  more complex situations.

\subsubsection{Optimal quantization of a surface}

\emph{Optimal quantization} is a sampling technique used to
approximate a density function with a point cloud, or more accurately
a finitely supported measure. It has many applications such as in
image dithering or in computer graphics (see \cite{de2012blue} for
more details).  Here, we show how to perform this kind of sampling on
triangulated surfaces.  Given a triangulated surface $ K \subset
\Rsp^3 $ and a density $ \mu $ on $K$, we first define $ Y_0 $ as the
set of vertices of $ K $ and consider the constant probability measure
$ \nu_0 $ on $ Y_0 $. For each $ k \ge 0 $, we solve the optimal
transport between $ \mu $ on $ K $ and $ \nu_k $ on $ Y_k $ and pick
one point, for instance the centroid, per Laguerre cell.  We iterate
this procedure by choosing for the new point cloud $ Y_{k+1} $ the set
of the previously computed centroids and for $\nu_{k+1} $ the uniform
measure over $ Y_{k+1} $.  After a few iterations, this gives us a
(locally) optimal quantization of $ K $.  Figure
\ref{fig:applications-blue-noise} shows examples of sampling on
different surfaces with different densities.

\begin{figure}[h]
    \centering
    \includegraphics[height=.2\textwidth]{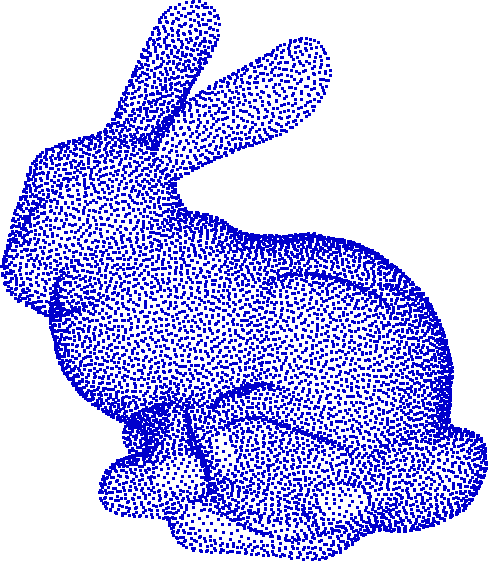}
    \includegraphics[height=.2\textwidth]{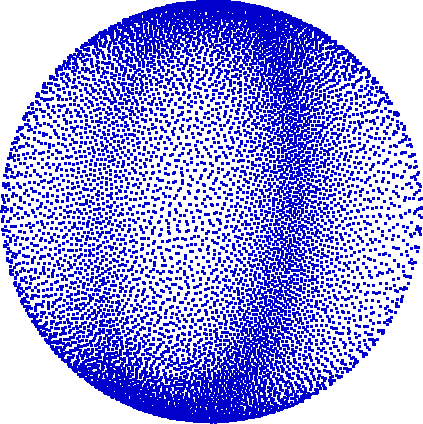}
    \includegraphics[height=.2\textwidth]{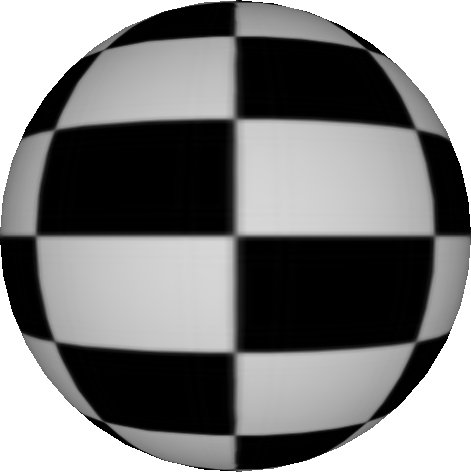}
    \includegraphics[height=.2\textwidth]{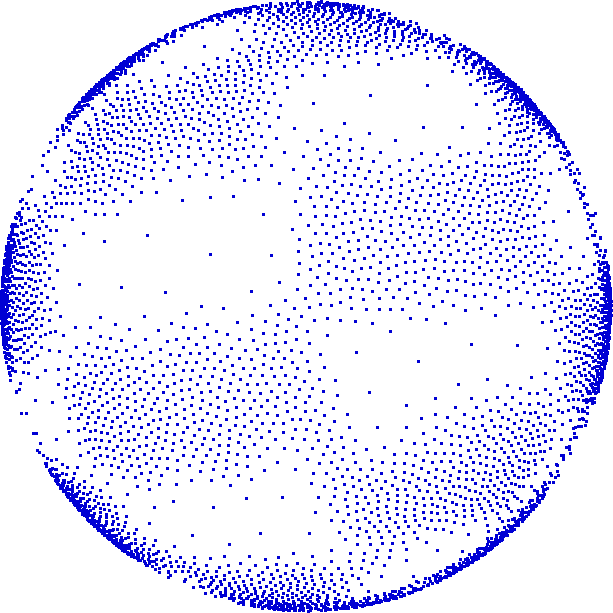}
    \caption{Optimal quantization of triangulated surfaces for different
        densities and surfaces. From left to right: uniform density $ \mu = 1 $ on the
        Stanford Bunny (10k points); non-linear density $\mu(x, y,z) = e^{-3|y|} $ on the sphere (10k
        points); checkerboard texture and sampling for the density corresponding
        to the UV-mapping of the texture on the hemisphere (5k points).}
    \label{fig:applications-blue-noise}
\end{figure}

\subsubsection{Remeshing}


We now consider the following problem: given a triangulated surface $ K $, a density
$ \mu $ supported on this mesh, we want to build a new mesh such that the
distribution of triangles respect this density, meaning that we want more
triangles where the density is more important. This has applications for
instance in finite element methods for solving partial differential equations
where the quality of the discretization matters. To do this, we can use the
following simple procedure: we consider the uniform discrete measure $\nu$ supported on the vertices of $K$; we solve
the optimal transport between $ \mu $ on $K$ and $ \nu $; the new mesh will be taken as the dual (in the graph sense) of the final Laguerre diagram.
See Figure \ref{fig:applications-remeshing} for two examples for different source
densities.

\begin{figure}[h]
    \centering

    \includegraphics[width=.23\textwidth]{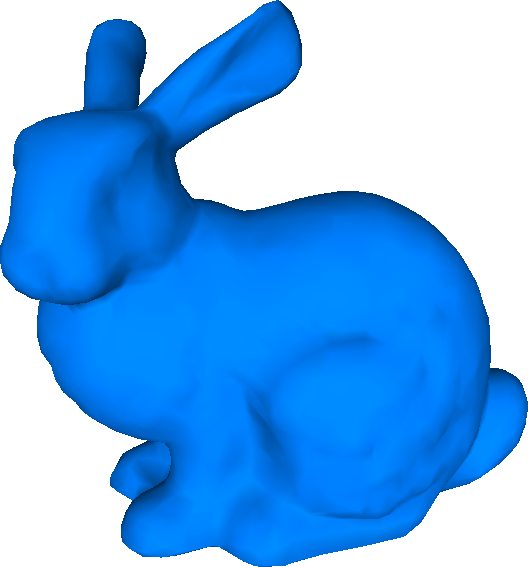}
    \includegraphics[width=.5\textwidth]{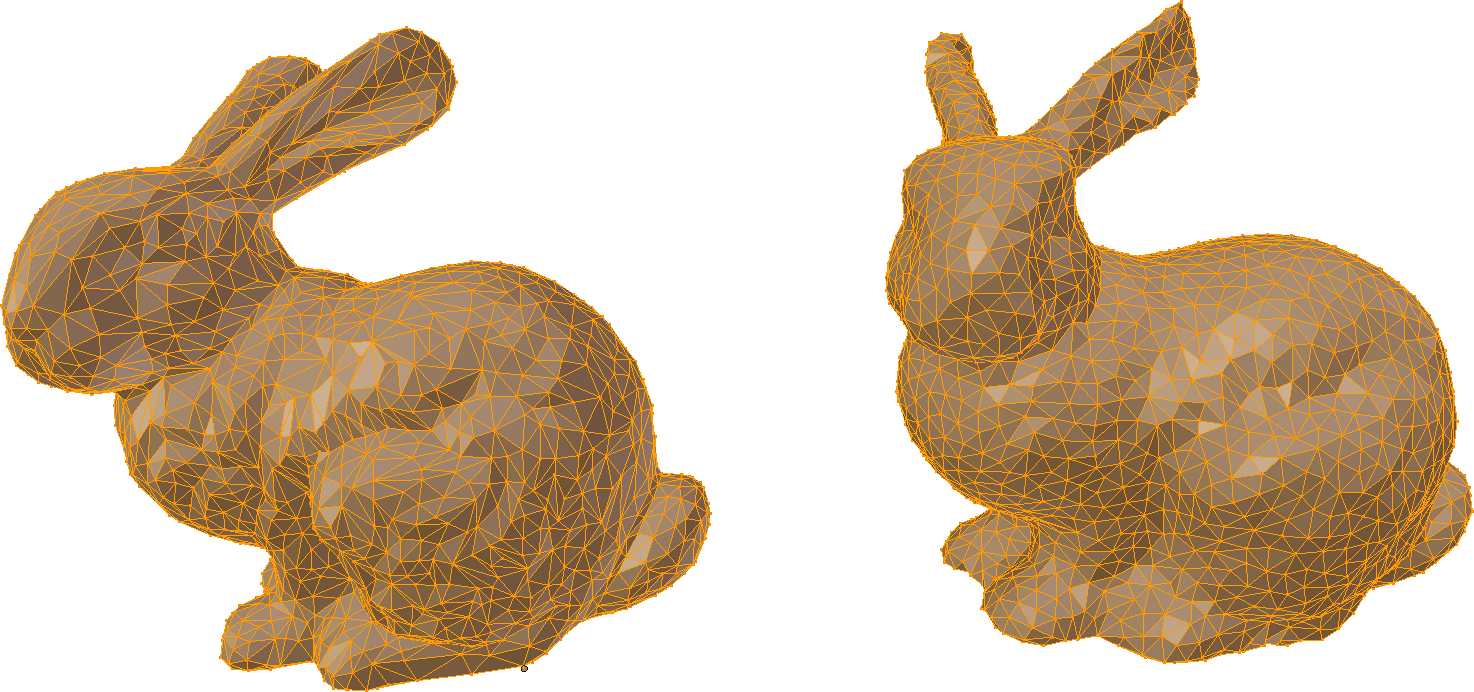}

    \includegraphics[width=.23\textwidth]{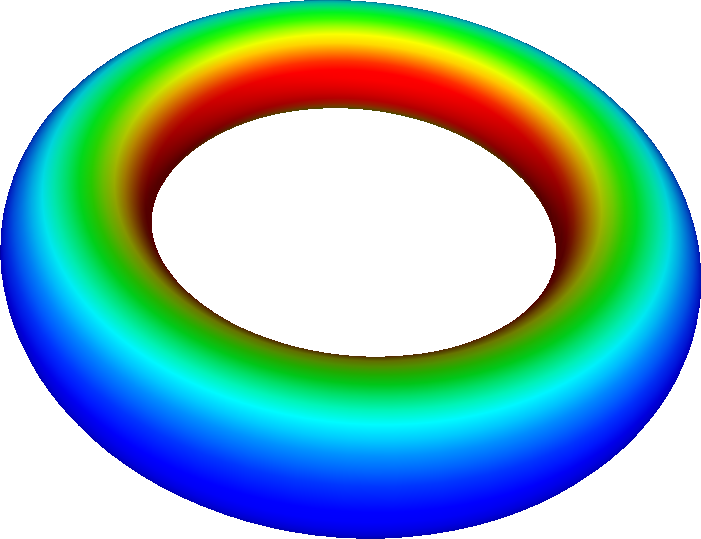}
    \includegraphics[width=.5\textwidth]{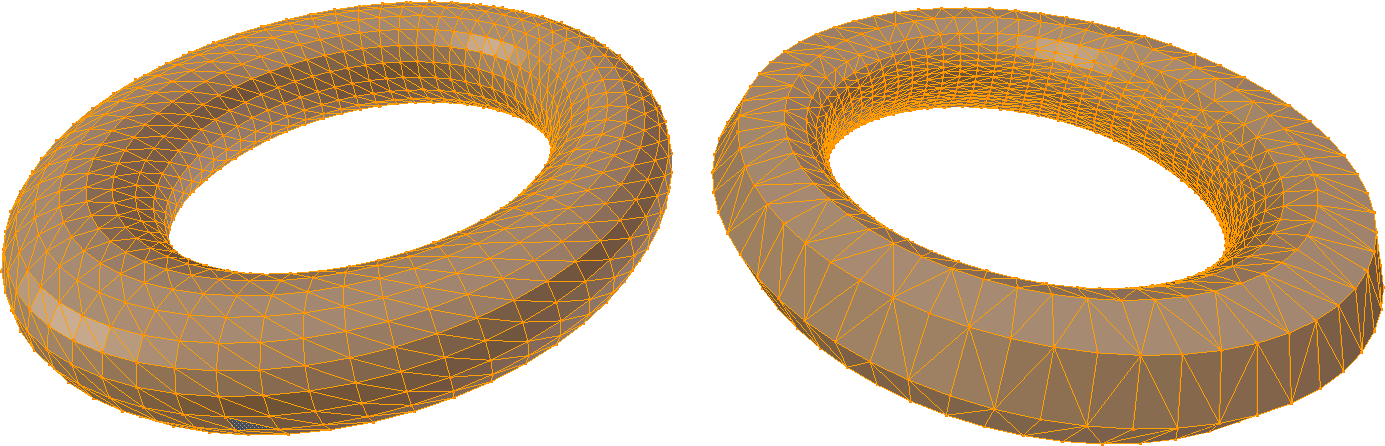}

    \caption{Remeshing using optimal transport. From left to right: source
        density; initial mesh and remeshed surface. First row: Uniform density: $ \mu = 1 $; Second
        row: $ \mu $ is proportional to a mean curvature estimator
        of the source mesh. Number of vertices for each model: Bunny: 2.2k;
        Torus: 5.6k.}
    \label{fig:applications-remeshing}
\end{figure}

\subsubsection{Point set registration}

We finally consider the rigid point set registration to a mesh. Given
a triangulated surface $ K $ and a point cloud $ Y $, we want to find
a rigid transformation $ T $ such that the $L^2$ distance between $ K
$ and $ T(Y) $ is minimal. The most popular method to do this is the
Iterative Closest Point (ICP) algorithm developed in
\cite{besl1992method}. For this algorithm, we need to be able to
compute for each point $ y_i $ from the cloud $ Y $ its closest point
on the mesh $ K $. We can replace the traditional nearest neighbor
query with the following routine: we solve the optimal transport
between the constant probability measure $ \mu $ on $ K $ and the
constant probability measure $ \nu $ on $ Y $, then associate each
point $ y_i $ to a point (for instance the centroid) of the Laguerre
cell $ \Lag_i(\psi) $ where $ \psi \in \Rsp^N $ are the final
weights. The resulting algorithm is called Optimal Transport ICP
(OT-ICP).  See Figure \ref{fig:applications-icp} for one example. In
our results, OT-ICP converges in much less iterations than standard
ICP, namely 3 iterations versus 20 iterations for the same stopping
criterion in our two test cases.  Despite this, the remains higher
with optimal transport.  One may hope that the use of optimal
transport "convexifies" the energy optimized by ICP, in the same way
the choice of a Wasserstein loss instead of a $\mathrm{L}^2$ distance
seems to mitigates the cycle-skipping issue in full waveform inversion
\cite{engquist2014application}.

\begin{figure}[h]
    \centering
    \includegraphics[height=.15\textheight]{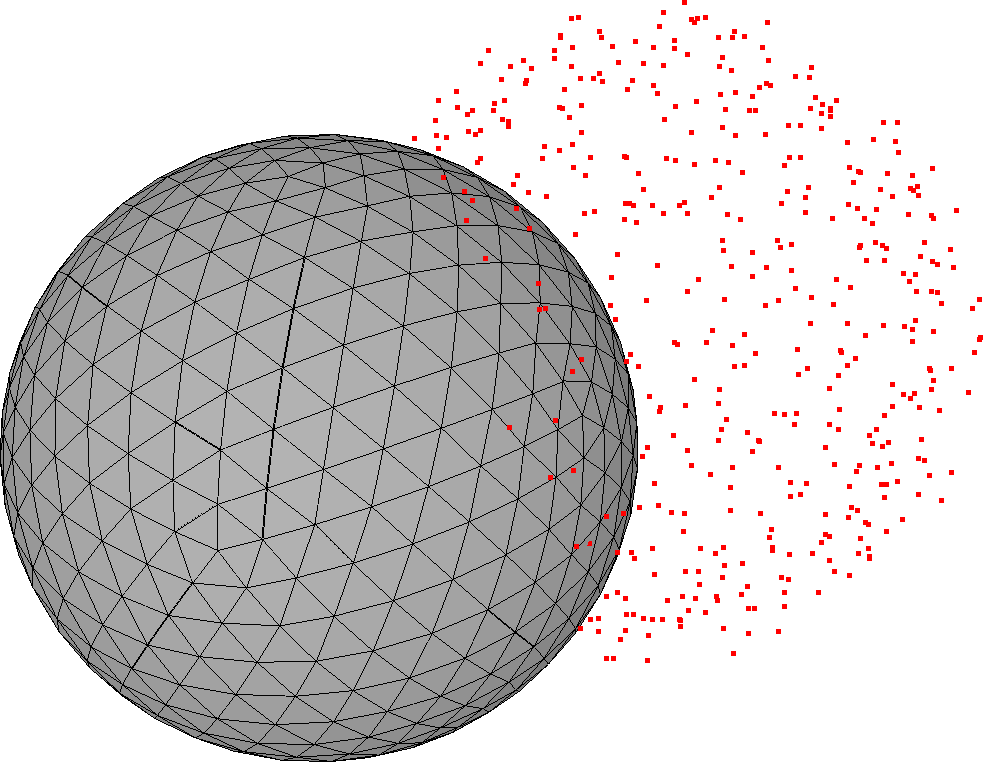}
    \includegraphics[height=.15\textheight]{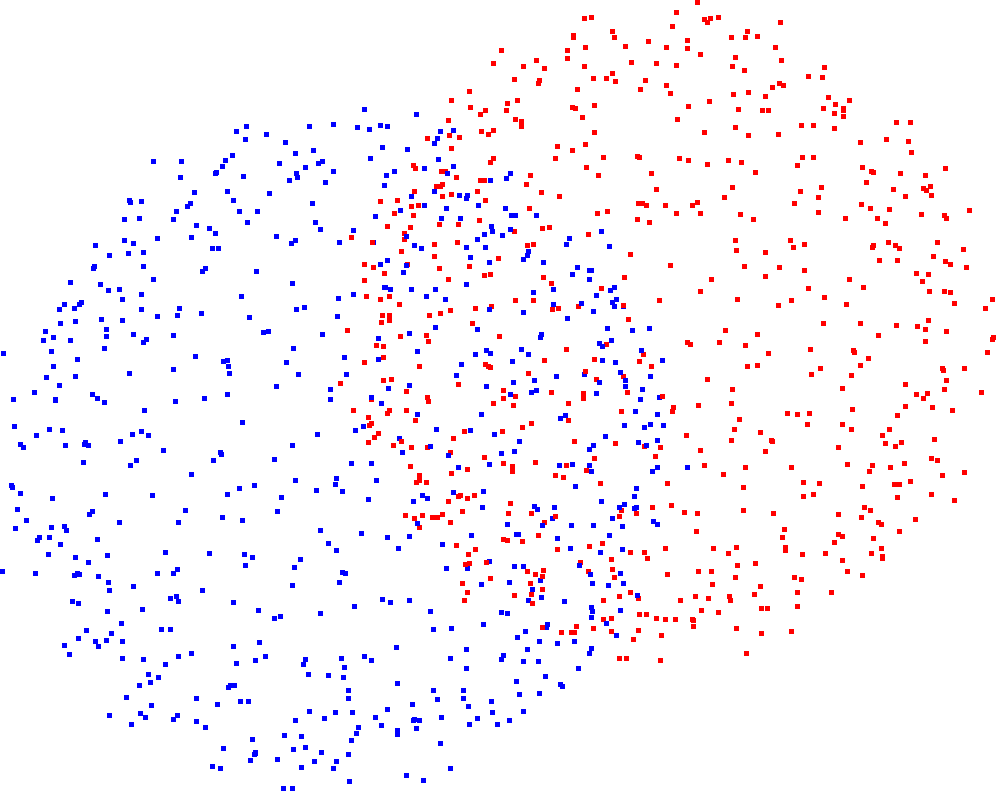}
    \includegraphics[height=.15\textheight]{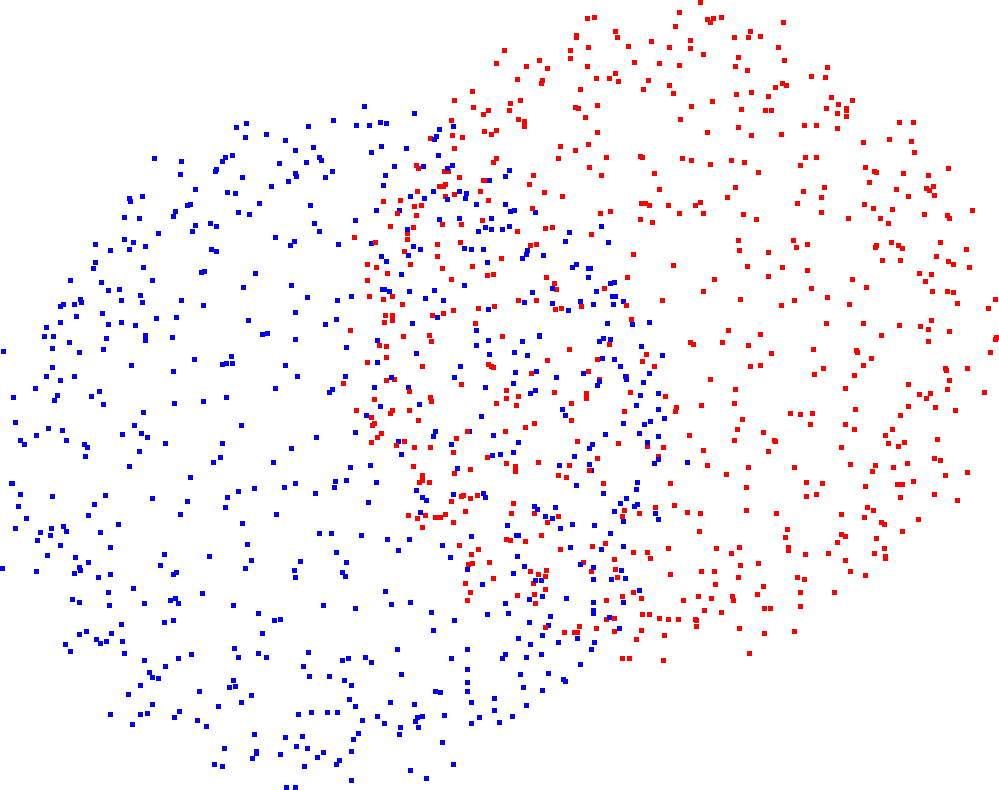}

    \includegraphics[height=.15\textheight]{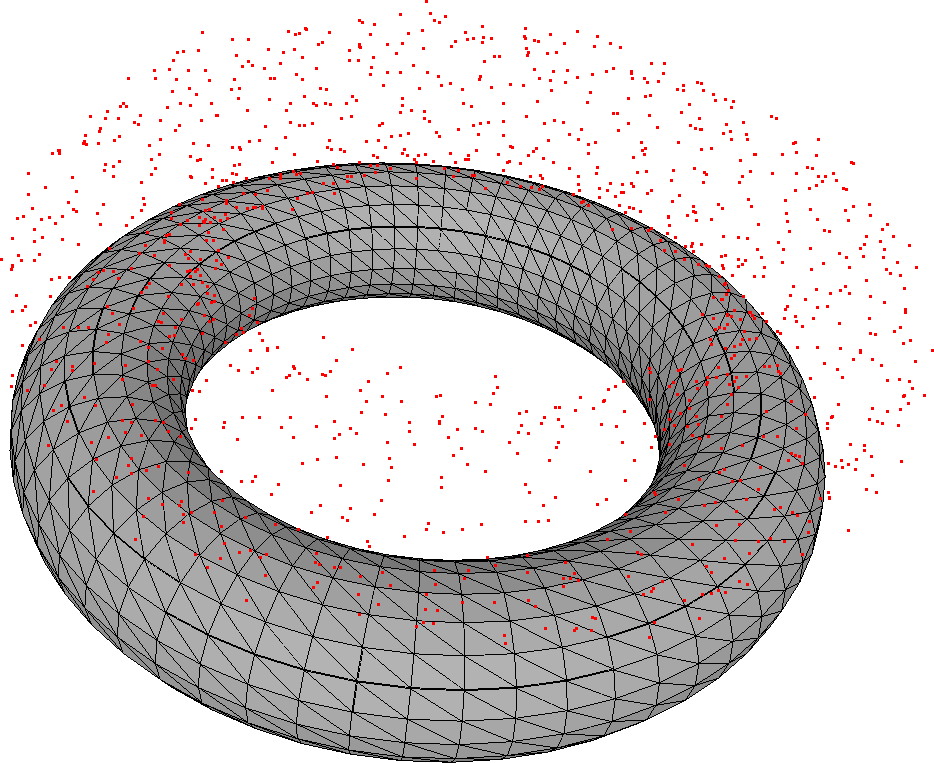}
    \includegraphics[height=.15\textheight]{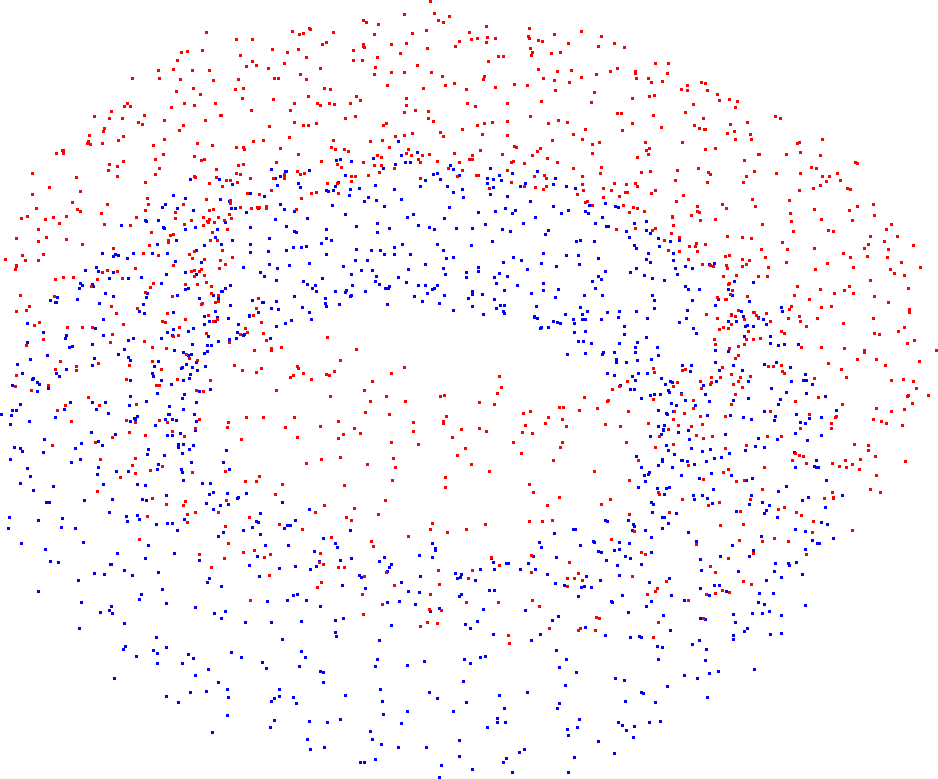}
    \includegraphics[height=.15\textheight]{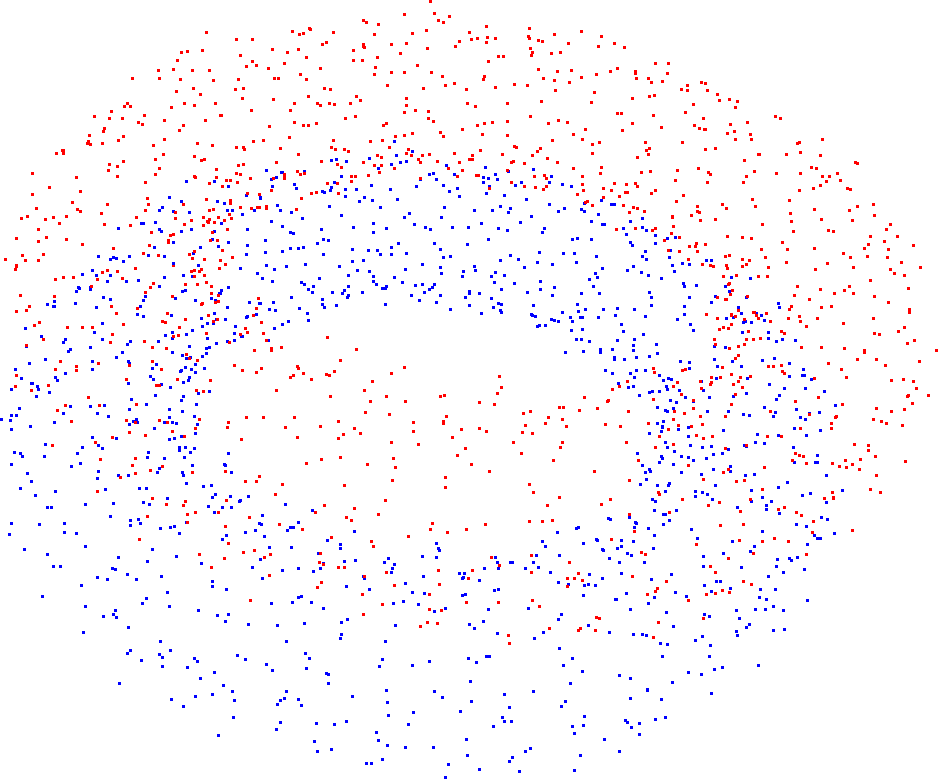}

    \caption{Optimal transport-ICP algorithm. From left to right:
      initial mesh (in grey) and initial point cloud (in red); initial
      (red) and final (blue) point clouds using traditional ICP;
      initial (red) and final (blue) using optimal transport.}
    \label{fig:applications-icp}
\end{figure}
}


\bibliographystyle{amsplain}
\bibliography{algoForManifold}

\end{document}